\documentclass[a4paper,fleqn]{cas-dc}

\usepackage[numbers]{natbib}
\usepackage[boxed,commentsnumbered,ruled,linesnumbered]{algorithm2e}
\usepackage{enumitem}
\usepackage{hyperref}
\newtheorem{definition}{Definition}  
\newtheorem{proof}{Proof}    
\newtheorem{theorem}{Theorem}   

\setlist[itemize]{nosep}
\hypersetup{
	colorlinks=true,  
	linkcolor=red,   
	citecolor=blue,    
	urlcolor=black,   
}

\begin{document}
	
\shorttitle{Cross-level Privacy Preserving Utility Mining} 
\shortauthors{Jiahong Cai et~al.}	
\title [mode = title]{Cross-level Privacy Preserving Utility Mining}                      	
\author[1]{Jiahong Cai}
\ead{jhcai321@gmail.com}
\address[1]{College of Cyber Security, Jinan University, Guangzhou 510632, P.R. China}
	
\author[1]{Wensheng Gan}
\cormark[1]
\ead{wsgan001@gmail.com}
\cortext[cor1]{Corresponding author}
	
\author[2]{Philip S. Yu}
\ead{psyu@uic.edu}
\address[2]{Department of Computer Science, University of Illinois Chicago, Chicago IL 60607, USA.}
	
\begin{abstract}
    Privacy-preserving utility mining (PPUM) aims to hide sensitive high-utility patterns while preserving the utility of the sanitized database. In practice, however, many datasets are associated with taxonomic information, which makes the identification and processing of generalized items more challenging. To address this, we investigate the cross-level privacy-preserving utility mining (CLPPUM) problem and propose a method for protecting generalized items. Based on different victim item selection strategies, we develop three CLPPUM algorithms: minimum \textit{RGISU} first (Min-RF), maximum \textit{RGISU} first (Max-RF), and best \textit{NSC} first (Best-NSCF). Furthermore, to enable efficient victim item identification, a novel dictionary structure named \textit{GI-dic} is designed to accelerate the computation of required utility metrics. Experimental results on multiple datasets demonstrate that the proposed algorithms successfully hide all sensitive cross-level high-utility itemsets without introducing artificial itemsets. The results also show that our method performs well on sparse datasets, and both Min-RF and Best-NSCF consistently outperform Max-RF. Overall, Min-RF achieves the best performance, particularly when the minimum utility threshold is low and the dataset is dense. Datasets and code are available at \url{https://github.com/jhcai321/CLPPUM}.
\end{abstract}

\begin{keywords}
    privacy-preserving \sep cross-level itemsets \sep sensitive pattern \sep utility mining \sep taxonomy
\end{keywords}
	
\maketitle
	
\section{Introduction} \label{sec:introduction}
	
With the rapid growth of large-scale databases, the volume of data has increased significantly. Data mining \cite{gan2017data,gan2020huopm} has therefore become an important tool for extracting useful information from large-scale datasets. For example, enterprises analyze transaction records and user behavior to improve decision-making and enhance profitability. In this context, data mining has been expanded into different research directions. Frequent itemset mining (FIM) \cite{agrawal1993mining} focuses on the combination of goods that frequently appear in transaction records. However, in real life, low-frequency combinations of goods are still very likely to generate large profits. To overcome the limitations of frequency pattern mining, the concept of high-utility itemset mining (HUIM) \cite{gan2021survey,yao2004foundational} was introduced. In HUIM, the utility of an item is determined by internal utility (quantity) and external utility (unit profit/importance). Therefore, it can be better used to analyze factors such as user preferences, importance, and profit in reality. A variety of HUIM algorithms have been developed \cite{liu2012mining,liu2005two,zida2017efim}. Although HUIM has been widely used, it ignores the concept of hierarchical structure in the real world. As shown in Fig. \ref{peripherals_taxonomy}, in computer peripherals, “keyboard” and “mouse” are the materialization of the abstract concept of “input device”. Therefore, to enable data mining to consider hierarchical information, the cross-level high-utility itemset mining (CLHUIM) \cite{cagliero2017discovering,fournier2020mining,tung2022efficient} has been proposed.

\begin{figure}[ht]
	\centering
	\includegraphics[clip,width=0.48\textwidth]{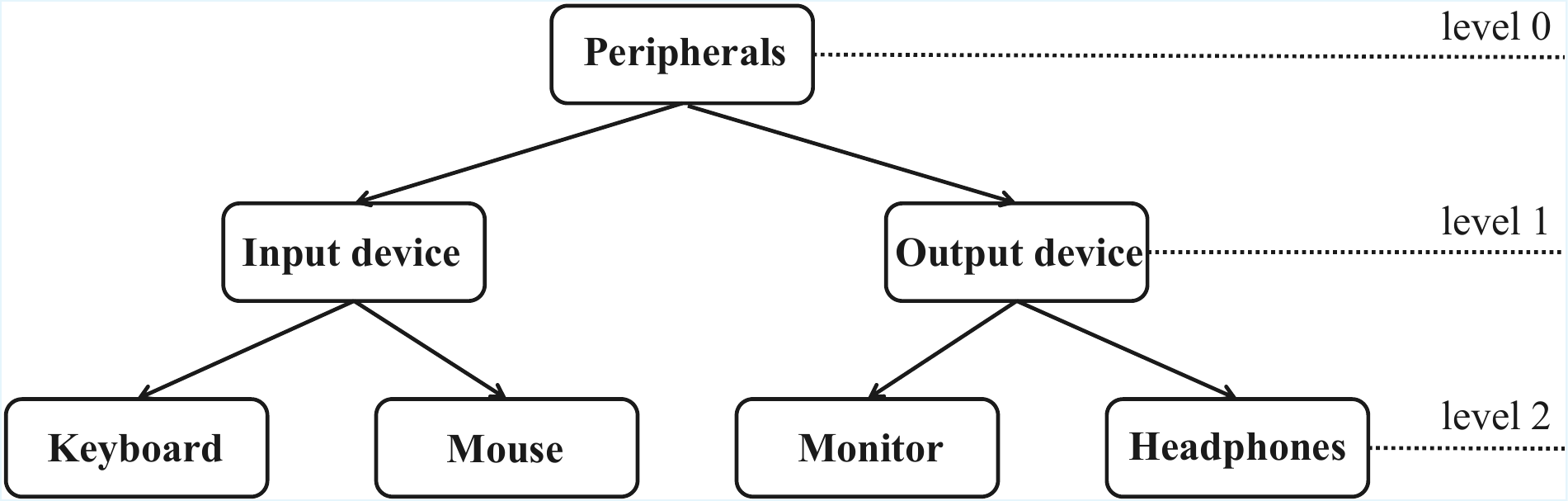} 
	\caption{A taxonomy tree of computer peripherals.}
	\label{peripherals_taxonomy} 
\end{figure}
	
HUIM can assist decision-makers in analyzing massive real-world datasets and developing high-profit strategies from discovered itemsets. However, in data analysis in medical, government, and commercial organizations, data is often transferred between data providers and data miners. Analyzing such data may lead to the leakage of sensitive information held by data providers. For example, in marketing scenarios, if the high-value discovered information reveals the consumption habits or behaviors of a particular gender, race, or group, then this information is likely to be sensitive, and its leakage may raise ethical concerns. Therefore, privacy-preserving data mining  (PPDM) \cite{agrawal2000privacy,lindell2000privacy} was proposed. PPDM minimizes the impact on non-sensitive information while hiding sensitive information, thereby preserving data availability. Privacy-preserving utility mining (PPUM) \cite{gan2018privacy} is a branch of PPDM and can be divided into heuristic-based \cite{lin2017efficient,lin2016fast,yeh2010hhuif}, border-based \cite{liu2018novel}, and exact \cite{li2019novel,nguyen2023new} algorithms. Existing PPUM methods are effective for hiding sensitive high-utility patterns in conventional settings, but they do not explicitly consider hierarchical information. In real-world data, itemsets may involve generalized items, which increases the complexity of representation and processing. As a result, directly applying existing PPUM methods to the cross-level setting may not effectively hide sensitive cross-level high-utility itemsets. This limitation motivates the study of cross-level privacy-preserving utility mining (CLPPUM), which aims to handle sensitive itemsets that contain items from different levels. The specific contributions of this paper are as follows:

\begin{itemize}
    \item This paper defines a new task, cross-level privacy-preserving utility mining (CLPPUM), for identifying and hiding cross-level high-utility itemsets (CLHUIs).
		
    \item Three indicators—sensitive count (SC), non-sensitive count (NSC), and real item sensitive utility (RISU)—are used to measure the relationship between items and sensitive/non-sensitive CLHUIs. These indicators are extended in the cross-level context, and a dictionary structure \textit{GI-dic} is designed to accelerate the calculation of indicators and assist in the selection of victim items.
		
    \item Three algorithms are designed to hide sensitive CLHUIs. The algorithms are named Minimum \textit{RGISU} First (Min-RF), Maximum \textit{RGISU} First (Max-RF), and Best \textit{NSC} First (Best-NSCF). The study also compares the effects of different indicators on SCLHUIs.
		
    \item Experiments are performed on several datasets to evaluate the performance of the proposed algorithms. The results indicate that the proposed algorithms perform well on sparse datasets, correctly hiding all sensitive CLHUIs without introducing any false CLHUIs into the database. As a result, both the hiding failure (HF) and the artificial cost (AC) are 0. Among the three proposed algorithms, Min-RF exhibits the optimal performance.		
\end{itemize}
	
The rest of this paper is arranged as follows: Section \ref{sec:relatedwork} presents and summarizes the research work related to PPUM. Section \ref{sec:preliminaries} introduces the basic knowledge of CLHUIM and PPUM. Section \ref{sec:algorithm} provides a detailed exposition of the three CLPPUM algorithms proposed in this paper and analyzes their privacy-related properties. Section \ref{sec:experiment} conducts experiments and analyzes the results. Section \ref{sec:conclusion} summarizes this paper and looks forward to future work.
	
\section{Related Work} \label{sec:relatedwork}
\subsection{High-utility itemset mining}
		
To consider both the number of items and their unit profit/weight in the data mining process, frequent itemset mining (FIM) has been extended to high-utility itemset mining (HUIM) \cite{gan2021survey,yao2004foundational}. However, since utility does not have anti-monotonicity with frequency, the HUIM task is more challenging \cite{liu2005two}. To address this issue, many studies introduced the concept of upper bounds and proposed many efficient HUIM algorithms to improve mining efficiency. Early work proposed the Two-Phase algorithm \cite{liu2005two}, which relies on the transaction-weighted utility (TWU) upper bound. Based on the TWU, several tree-based two-phase approaches were developed, including IHUP \cite{ahmed2009efficient}, UP-Growth \cite{tseng2010up}, and UP-Growth+ \cite{tseng2012efficient}. Although these methods reduce the number of database scans, they still rely on the loose TWU upper bound and tend to generate a large number of candidate itemsets. To overcome this limitation, Liu et al. \cite{liu2012mining} proposed HUI-Miner, a one-phase algorithm based on the utility-list. By adopting a tighter remaining utility upper bound, it avoids candidate generation altogether. Subsequent methods, such as D2HUP \cite{liu2015mining}, FHM \cite{fournier2014fhm}, and EFIM \cite{zida2017efim}, further improved efficiency through tighter upper bounds, more effective pruning strategies, and techniques such as high-utility database projection (HDP) and high-utility transaction merging (HTM). More recent studies have also explored memory reuse \cite{duong2018efficient}, bitwise acceleration \cite{wu2022ubp}, structural simplification \cite{cheng2023efficient}, and approximate search \cite{qu2023mining, yan2024efficient} to reduce both memory consumption and computational cost during utility-list construction and processing.
	
Traditional HUIM cannot identify hierarchical information in reality. To address this limitation, taxonomy information has been incorporated into HUIM to support hierarchical structures. ML-HUI-Miner \cite{cagliero2017discovering} introduced the concept of generalized HUIs. However, this algorithm can only mine items belonging to the same taxonomy level and cannot identify itemsets that span different levels. Subsequently, CLH-Miner \cite{fournier2020mining} and FEACP \cite{tung2022efficient} were proposed to support cross-level high-utility itemset mining (CLHUIM), using tighter upper bounds to reduce memory usage and runtime. Additionally, variants of CLHUIM algorithms have been proposed to discover top-k CLHUIs \cite{han2024mining,nouioua2020tkc, truong2023efficient} and CLHUIs in databases with unstable and negative profits \cite{tung2025mining}.
	
\subsection{Privacy-preserving utility mining}
	
Utility pattern mining can assist in analyzing high-value information within data, bringing significant benefits to companies. However, when enterprises store, use, or share business and healthcare data, there is a high risk of disclosing sensitive or confidential information, which may lead to privacy concerns \cite{wu2021hiding}. As a result, privacy-preserving data mining (PPDM) has been developed into privacy-preserving utility mining (PPUM). These algorithms can hide sensitive information in a database while balancing privacy protection and data sharing. PPUM algorithms can be divided into heuristic-based, border-based, and exact algorithms. Yeh and Hsu \cite{yeh2010hhuif} first formalized the PPUM problem and proposed two classical heuristic-based algorithms, HHUIF and MSICF. To improve efficiency, a tree-based method called FPUTT \cite{yun2015fast} organizes the database to accelerate the identification and hiding of sensitive itemsets. Yin et al. \cite{yin2023fast} proposed FULD, which utilizes a novel utility dictionary structure to enhance performance. Subsequent studies have explored more sophisticated strategies for selecting victim items. Lin et al. \cite{lin2016fast} proposed the MSU-MAU and MSU-MIU algorithms, incorporating multiple similarity measures to evaluate the side effects of data sanitization. Lin et al. \cite{liu2020improved} further considered non-sensitive itemsets in the victim selection process and introduced the IMSICF algorithm, effectively reducing the impact on non-sensitive patterns. The MinMax and Weighted algorithms \cite{jangra2022efficient} combine dual sorting strategies with different victim selection criteria to improve efficiency and reduce side effects. Ashraf et al. \cite{ashraf2023efficient} introduced a new metric, real item sensitive utility (RISU), along with a refined sorting strategy to further optimize performance.

In addition, boundary-based and exact approaches have been proposed to determine optimal sanitization strategies, such as using maximum boundary values \cite{liu2018novel} or formulating the problem as integer programming \cite{li2019novel,nguyen2024novel,nguyen2023new}. However, these methods often suffer from high computational complexity and long execution time \cite{jangra2022efficient}. Although heuristic-based algorithms cannot guarantee the best results, they are easy to understand and implement. In addition, heuristic-based privacy-preserving algorithms have also been extended to other related mining tasks, including frequent high average-utility itemset mining \cite{le2022h}, periodic high-utility pattern mining \cite{zhou2025utility}, rare itemset mining \cite{chen2025rare,gui2024privacy}, and association rule mining \cite{aljehani2025preserving}. Therefore, heuristic-based algorithms remain the focus of most current research. Although numerous algorithms have been developed for hiding HUIs, none of the existing privacy-preserving methods have taken real-world categorical hierarchical information into account. Therefore, this paper defines cross-level PPUM that aims to achieve the hiding of sensitive cross-level high-utility itemsets.

\section{Preliminaries} \label{sec:preliminaries}
	
Let $I$ = \{$i_1$, $i_2$, ..., $i_m$\} be a set of items. A quantitative transaction database is a collection of transactions $\mathcal{D}$ = \{$T_1$, $T_2$, ..., $T_n$\}. For each transaction $T_j$ $\in$ $\mathcal{D}$, $T_j \subseteq I$, each $T_j$ has a unique identifier $j$. For each item $v$ $\in$ $I$, the associated external utility (i.e., unit profit) is denoted as  $p(v)$. For each item $v$ $\in$ $T_j$, its internal utility, referring to the quantity in transaction $T_j$, is represented by $q(v,T_j)$. For example, Table \ref{Tab:transaction} contains six items ($I$ = \{$a$, $b$, $c$, $d$, $e$, $f$\}) and eight transactions ($\mathcal{D}$ = \{$T_1$, $T_2$, ..., $T_8$\}). In the transaction $T_3$, the internal utility of items  $a$, $c$, and $d$ is 1, 5, and 1, respectively. We set the external utility values of \{$a$, $b$, $c$, $d$, $e$, $f$\} to \{5, 1, 3, 3, 2, 1\}. 
	
	\begin{table}
		\centering	\makeatletter\def\@captype{table}\makeatother\setlength{\belowcaptionskip}{10pt}
		\caption{An example quantitative transaction database.}
		\label{Tab:transaction}
		{\normalsize\fontfamily{ptm}\selectfont  
			\begin{tabular}{p{0.5cm} p{4.5cm} p{0.8cm}} 
				\hline
				\textbf{$T_{id}$} & \textbf{Transaction} & \textbf{Utility} \\
				\hline
				$T_1$ & $(a,1)$, $(b,1)$, $(d,1)$ & 9 \\
				$T_2$ & $(a,2)$, $(d,3)$, $(e,1)$ & 21 \\
				$T_3$ & $(a,1)$, $(b,2)$, $(c,5)$, $(d,1)$, $(e,3)$ & 26 \\
				$T_4$ & $(d,4)$, $(e,3)$ & 18 \\
				$T_5$ & $(a,1)$, $(b,1)$, $(d,1)$ & 9 \\
				$T_6$ & $(d,5)$, $(e,2)$, $(f,2)$ & 21 \\
				$T_7$ & $(a,2)$, $(c,1)$ & 13 \\
				$T_8$ & $(a,1)$, $(b,4)$, $(e,3)$ & 15 \\
				\hline
			\end{tabular}
		}
	\end{table}

\begin{definition}
    \label{def:Taxonomy}
	\rm \textbf{(Taxonomy) \cite{cagliero2017discovering,fournier2020mining}}. A taxonomy $\tau$ is a tree structure defined on the quantitative transaction database $\mathcal{D}$. In this structure, each leaf node corresponds to an item $v$ $\in$ $I$. Each internal node denotes a category formed by aggregating all of its descendant leaf nodes, referred to as a generalized item (generalized items). The set of all generalized items is denoted as $GI$, and the union of all generalized items and leaf items is denoted as $AI$, i.e., $AI$ = $GI \cup I$. Let the relationship $LR \subseteq GI \times I$, such that if there exists a path from $g$ to $v$, then $(g, v) \in LR$. Similarly, let the relationship $GR \subseteq AI \times AI$, such that if there exists a path from item $d$ to item $f$, then $(d, f) \in GR$. 
\end{definition}
	
\begin{definition}\label{def:Descendant}
	\rm \textbf{(Descendant) \cite{fournier2020mining}}. In taxonomy $\tau$, the leaf items of a generalized item $g$ are all leaf nodes of $g$ in the tree structure that are reachable from $g$ along a path. This collection is formally defined as $\textit{Leaf}(g, \tau)$ = $\{v \mid (g, v) \in LR\}$. The descendant nodes of a (generalized) item $d$ are referred to as all nodes in the tree structure that are descendants of $d$, and are defined as: $\textit{Desc}(d, \tau)$ = $\{f \mid (d, f) \in GR\}$. $\textit{level}(d)$ denotes the number of edges on the path from the root node to item $d$. A collection of items $P$ is considered an itemset, where $P \subseteq AI \land \nexists$ $i, j$ $\in P, i \in \textit{Desc}(j, \tau)$. If $\exists$ $g \in P$ where $g \in GI$, then $P$ is a generalized itemset. 
\end{definition} 
	
\begin{figure}[!htbp]
	\centering
	\includegraphics[width=0.65\linewidth]{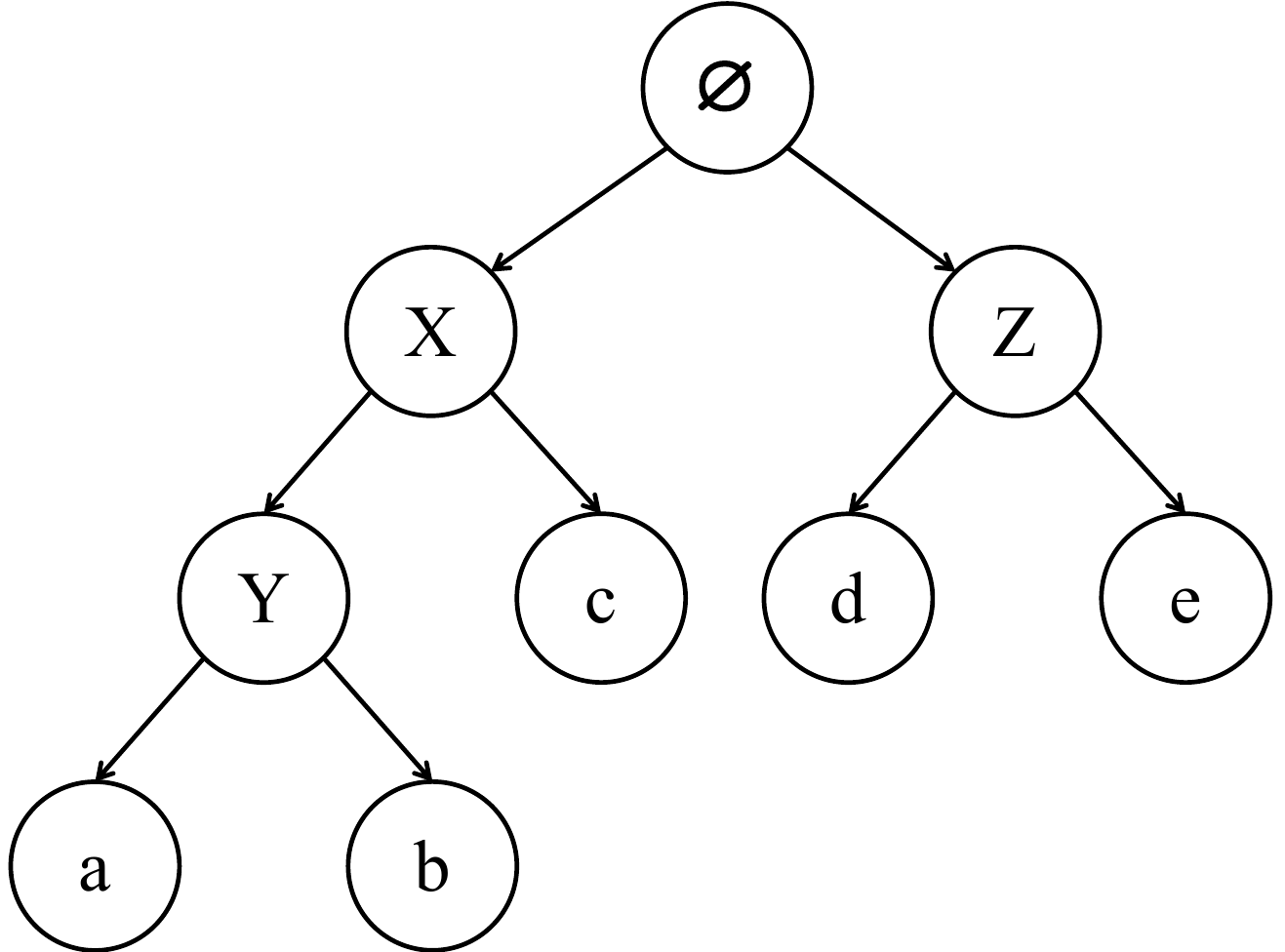} 
	\caption{A taxonomy of items.}
	\label{taxonomy} 
\end{figure}

\begin{definition}\label{def:Utility_of_a_generalized_item/itemset}
	\rm \textbf{(Utility of a generalized item/itemset) \cite{fournier2020mining}}. The utility of an item $v$ in a transaction $T_j$ is calculated as: $u(v, T_j)$ = $q(v, T_j)$ $\times$ $p(v)$. The utility of an itemset $P$ in $T_j$ is calculated as $u(P, T_j)$ =  $\sum_{ v \in P }u(v, T_j)$. The utility of an itemset $P$ in $\mathcal{D}$ is $u(P)$ = $\sum_{ T_j \in g(P) }u(P, T_j)$, where $g(P)$ is the set of transactions in $\mathcal{D}$ that contain $P$. The utility of a generalized item $g$ in a transaction $T_j$ is denoted as $u(g, T_j)$, is calculated as the sum of the utilities of all its leaf items: $\sum_{ v \in \textit{Leaf}(g,\tau) }p(v) \times q(v, T_j)$. The utility of a generalized itemset $\textit{GP}$ in $T_j$ is determined as $u(\textit{GP}, T_j) $ = $\sum_{ d \in \textit{GP} }u(d, T_j)$. The utility of a generalized itemset $\textit{GP}$ in $\mathcal{D}$ is calculated as $u(\textit{GP})$ = $\sum_{ T_j \in g(\textit{GP}) }u(\textit{GP}, T_j)$, where $g(\textit{GP})$ is defined as $g(\textit{GP})$ = $\{T_j \in \mathcal{D} \mid v \in T_j  \land \forall g \in \textit{GP} \land \exists v \in \textit{Leaf}(g, \tau)\}$.
\end{definition}
	
\begin{definition}\label{def:Cross-Level_high-utility_itemset}
	\rm \textbf{(Cross-level high-utility itemset) \cite{fournier2020mining}}. A (generalized) itemset $\textit{GP}$ is considered a cross-level high-utility itemset (CLHUI) if and only if the utility of the itemset \textit{GP} is greater than or equal to the minimum utility threshold \textit{minutil}: $u(P) \geq$ \textit{minutil}.
\end{definition}
	
The transaction utility (TU) of a transaction $T_j$ is defined as the sum of utilities of all items in $T_j$, calculated as $\textit{TU}(T_j)$ = $\sum_{x \in T_j} u(x, T_j)$ \cite{liu2005two}. For example, in Fig \ref{taxonomy}, the leaf items of the generalized item $X$ are $\textit{Leaf}(\{X\}, \tau)$ = $\{a, b, c\}$, and its descendant items are $\textit{Desc}(\{X\}, \tau)$ = $\{Y, a, b, c\}$. The level of item $Y$ is $\textit{level}(Y)$ = 2. In Table \ref{Tab:transaction}, $\textit{TU}(T_8)$ = $u(a,T_8)$ + $u(b,T_8)$ + $u(e,T_8)$ = 5 + 4 + 6 = 15. Assuming \textit{minutil} = 50, according to the example in Table \ref{Tab:transaction}, $u(X, T_5)$ = $u(a, T_5)$ + $u(b, T_5)$ = 1 $\times$ 5 + 1 $\times$ 1 = 6.  $u(X)$ = $u(X, T_1)$ + $u(X, T_3)$ + $u(X, T_5)$ + $u(X, T_7)$ + $u(X, T_8)$ = 6 + 10 + 22 + 6 + 13 + 9 = 66. Since $u(X)$ = 66 > 50, ${X}$ is a CLHUI. Let itemset $P$ = $\{Z, b\}$, $u(P, T_5)$ = $u(b, T_5)$ + $u(d, T_5)$ = 1 + 3 = 4. $u(P)$ = $u(P, T_3)$ + $u(P, T_5)$ + $u(P, T_8)$ = 11 + 4 + 10 = 25. All CLHUIs are shown in Table \ref{Tab:CLHUIs}.

\begin{table}
	\centering	\makeatletter\def\@captype{table}\makeatother\setlength{\belowcaptionskip}{10pt}
	\caption{Cross-level high-utility itemsets.}
	\label{Tab:CLHUIs}
	{\normalsize\fontfamily{ptm}\selectfont  
			\begin{tabular}{p{1.5cm} p{1.5cm} p{1.5cm} p{1.5cm}} 
				\hline
				\textbf{Itemset} & \textbf{Utility} & \textbf{Itemset} & \textbf{Utility}\\
				\hline
				$\{X\}$ & 66 & $\{X, Z\}$ & 85  \\
				$\{X, e\}$ & 55 & $\{X, d\}$ & 62  \\
				$\{Z\}$ & 69 & $\{Z, Y\}$ & 70  \\
				$\{Z, a\}$ & 62 & $\{e, d\}$ & 57  \\
				\hline
			\end{tabular}
	}
\end{table}
	
\begin{definition}\label{def:SCLHUIs}
	\rm \textbf{(Sensitive and non-sensitive cross-level high-utility itemsets)}. The set of CLHUIs containing private or sensitive information that needs to be hidden is defined as sensitive CLHUIs (SCLHUIs). Let SCLHUIs = \{$sI_1$, $sI_2$, ..., $sI_m$\}, this set represents the SCLHUIs that are hidden in the database. The set of CLHUIs that need to be preserved to ensure the data availability is defined as the non-sensitive CLHUIs (NSCLHUIs). Let NSCLHUIs = \{$nsI_1$, $nsI_2$, ..., $nsI_k$\}, where SCLHUIs $\cup$ NSCLHUIs = CLHUIs.
\end{definition}
	
\begin{definition}\label{def:SC_T}
	\rm \textbf{(Sensitive count, non-sensitive count, and weight for transactions) \cite{ashraf2023efficient,jangra2022efficient}}. A transaction $T_j$ is a sensitive transaction $ST_j$ if there exists an SCLHUI $s_i$ contained in the transaction $T_j$, and the set of all sensitive transactions is the sensitive transaction set, denoted as $ST$. The number of SCLHUIs appearing in a transaction $T_j$ is denoted as $\textit{SC}(T_j)$. The number of NSCLHUIs appearing in a transaction $T_j$ is denoted as $\textit{NSC}(T_j)$. The sensitive weight of $T_j$ is calculated as follows: $\textit{Wt}(T_j ) = \frac{\textit{SC}(T_j)}{\textit{NSC}(T_j )+1}.$
\end{definition}
	
Let SCLHUIs = $\{\{X, d\}, \{Z, Y\}, \{e, d\}\}$. The transactions in which they appear are shown in Table \ref{Tab:SCLHUIs}. Therefore, the sensitive transaction $ST$ = \{$T_1$, $T_2$, $T_3$, $T_4$, $T_5$, $T_6$, $T_8$\}. Since there are $\{X,d\}$ and $\{Z,Y\}$ in $T_1$, so $\textit{SC}(T_1)$ = 2. Additionally, because $\{X\}$, $\{Z\}$, $\{Z, a\}$ and $\{X, Z\}$ are in $T_1$ so $\textit{NSC}(T_1)$ = 2, $\textit{Wt}(T_1)$ = 2 / ( 4 + 1 ) = 0.4. \textit{SC}, \textit{NSC}, and \textit{Wt} of $ST$ are shown in Table \ref{Tab:SC_T}.
	
\begin{table}[width=0.5\textwidth]
	\centering	\makeatletter\def\@captype{table}\makeatother\setlength{\belowcaptionskip}{10pt}
		\caption{Sensitive cross-level high-utility itemsets and their transactions.}
		\label{Tab:SCLHUIs}
		{\normalsize\fontfamily{ptm}\selectfont  
			\begin{tabular}{p{1.5cm} p{2.5cm}} 
				\hline
				\textbf{SCLHUI} & \textbf{Transactions}\\
				\hline
				$\{X, d\}$ & $T_1$, $T_2$, $T_3$, $T_5$ \\
				$\{Z, Y\}$ & $T_1$, $T_2$, $T_3$, $T_5$, $T_8$  \\
				$\{e, d\}$ & $T_2$, $T_3$, $T_4$, $T_6$  \\
				\hline
			\end{tabular}
		}
\end{table}

\begin{table}
	\centering	\makeatletter\def\@captype{table}\makeatother\setlength{\belowcaptionskip}{10pt}
	\caption{Sensitive counts, non-sensitive counts, and weights for transactions.}
	\label{Tab:SC_T}
	{\normalsize\fontfamily{ptm}\selectfont  
	\begin{tabular}{p{0.7cm} p{0.6cm} p{0.6cm} p{0.6cm} p{0.6cm} p{0.6cm} p{0.6cm} p{0.6cm}} 
		\hline
		\textbf{$ST$} & \textbf{$T_1$} & \textbf{$T_2$} & \textbf{$T_3$} & \textbf{$T_4$} & \textbf{$T_5$} & \textbf{$T_6$} & \textbf{$T_8$}\\
		\hline
		\textit{SC} & 2 & 3 & 3 & 1 & 2 & 1 & 1  \\
		\textit{NSC} & 4 & 5 & 5 & 1 & 4 & 1 & 5  \\
		\textit{Wt} & 0.40 & 0.50 & 0.50 & 0.50 & 0.40 & 0.50 & 0.17  \\
		\hline
		\end{tabular}
	}
\end{table}

There may be no NSCLHUIs in some sensitive transactions; therefore, 1 is added to the denominator during \textit{Wt} calculation. The higher SC of the transaction can affect more SCLHUIs when processing the transaction, speeding up the algorithm's efficiency. The lower NSC of the transaction can reduce the impact on NSCLHUIs, minimizing the algorithm's side effects. By sorting transactions based on their \textit{Wt}, it is able to prioritize the more suitable transactions as victim transactions \cite{ashraf2023efficient}. According to Table \ref{Tab:SC_T}, the sample ordering of $ST$ is $T_2 \prec T_3 \prec T_4 \prec T_6 \prec T_1 \prec T_5 \prec T_8$.

Because generalized items are associated with taxonomy information, hiding them is more complicated than hiding ordinary items. When the utility of an ordinary itemset is reduced by deleting a leaf item from a transaction, that itemset will no longer appear in the transaction once one of its constituent items is removed. In contrast, for a generalized itemset, deleting a leaf item does not necessarily remove the generalized itemset from the transaction, because other leaf items of the same generalized item may still remain. Traditional PPUM algorithms do not distinguish between ordinary items and generalized items, and therefore cannot correctly handle this situation. For example, consider the database in Table \ref{Tab:transaction} and the generalized itemset $\{X,d\}$. If the item $d$ is removed from transaction $T_1$, then $\{X,d\}$ no longer appears in $T_1$, and its utility becomes $u(\{X,d\})$ = $u(\{X,d\})$ - $u(\{X,d\}, T_1)$ = 62 - 5 - 1 - 3 = 53. However, if the item $a$ is removed from $T_1$, the item $b$ still remains in the transaction. Since both $a$ and $b$ belong to the generalized item $X$, the itemset $\{X,d\}$ still appears in $T_1$, and its utility becomes$u(\{X,d\})$ = $u(\{X,d\})$ - $u(a, T_1)$ = 62 - 5 = 57. This difference shows that the hiding mechanism for generalized itemsets cannot be handled in the same way as that for ordinary itemsets.
	
\textbf{Problem statement}: Given a database $\mathcal{D}$, a taxonomy $\tau$, a user-specified minimum utility threshold \textit{minutil}, the set of all cross-level high-utility itemsets CLHUIs, and the set of all sensitive cross-level high-utility itemsets SCLHUIs, the task of the cross-level privacy-preserving utility mining (CLPPUM) is to hide all SCLHUIs in the database $\mathcal{D}$ and output the sanitized database $\mathcal{D'}$. Compared with the traditional PPUM, the key challenge of CLPPUM lies in identifying generalized items that do not explicitly exist in the database and applying reduction or deletion operations on them to achieve the hiding of generalized itemsets.
	
Based on the definition of the CLPPUM, the framework of CLPPUM is illustrated in Fig. \ref{framework}. The heuristic-based item deletion PPUM algorithm selects victim items and victim transactions related to SCLHUI. By modifying or deleting the victim items in the victim transaction, the algorithm reduces the utility of the SCLHUIs below the \textit{minutil}. Common evaluation metrics for PPUM algorithms include hiding failure (HF), missing cost (MC), artificial cost (AC), itemset utility similarity (IUS), database utility similarity (DUS), and transaction modification ratio (TMR).
	
\begin{figure*}[htbp]
	\centering
	\includegraphics[width=0.75\textwidth]{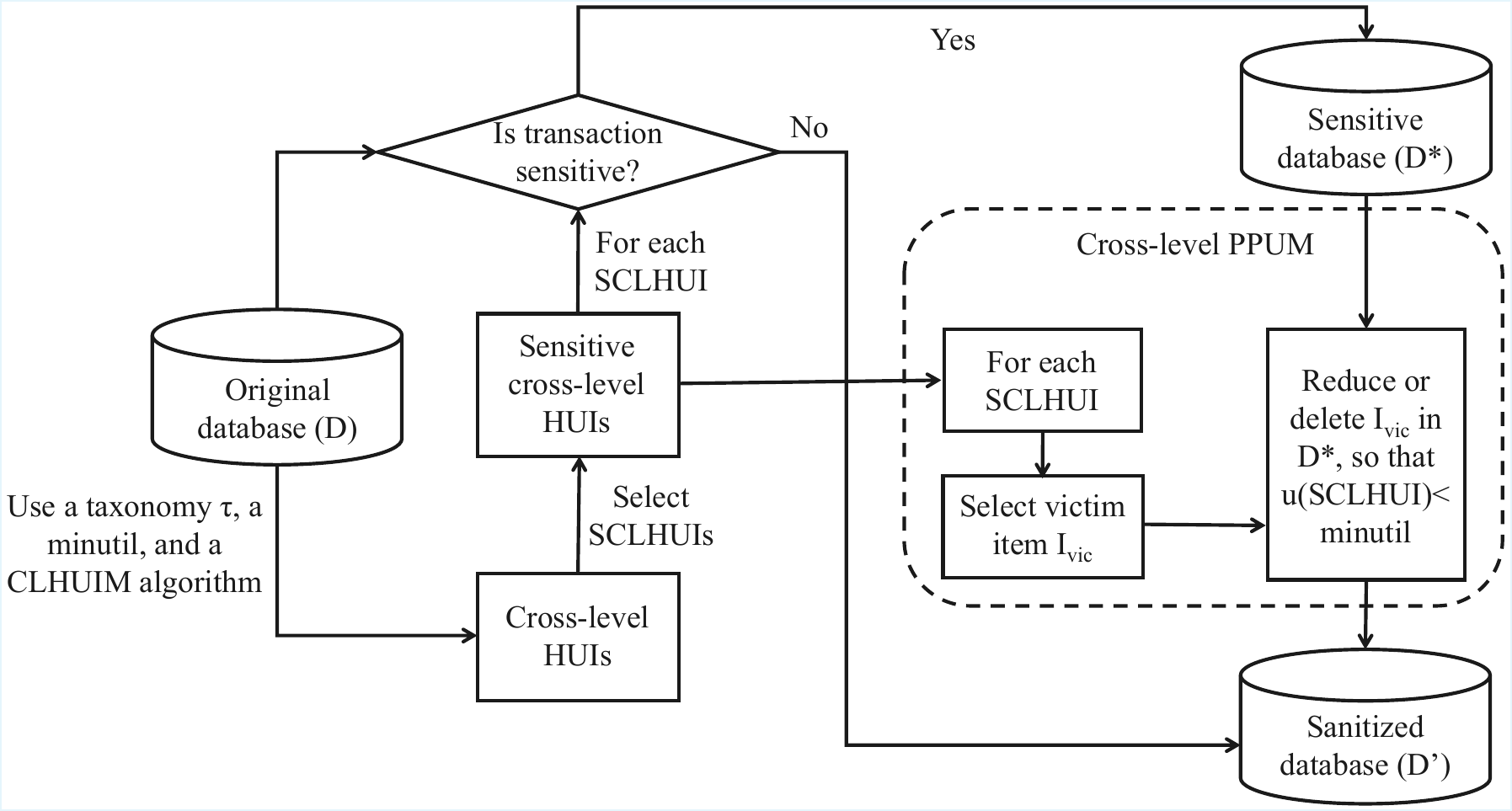} 
	\caption{Algorithmic framework of CLPPUM.}
	\label{framework} 
\end{figure*}
	
\begin{definition}\label{def:HF}
	\rm \textbf{(Hiding failure) \cite{lin2016fast,yeh2010hhuif}}. Hiding failure is defined as the ratio of SCLHUIs that remain unhidden after database sanitization to the total number of SCLHUIs. This ratio is calculated by the following formula: $\textit{HF} = \frac{|\textit{SCLHUIs} \cap \textit{CLHUIs'}|}{|\textit{SCLHUIs}|}.$
\end{definition}
	
\begin{definition}\label{def:MC}
	\rm \textbf{(Missing cost) \cite{lin2016fast,yeh2010hhuif}}. Missing cost is defined as the ratio of NSCLHUIs that are hidden after database sanitization to the total number of NSCLHUIs. This ratio is calculated by the following formula: $\textit{MC} = \frac{|\textit{NSCLHUIs} - \textit{CLHUIs'}|}{|\textit{NSCLHUIs}|}.$
\end{definition}
	
\begin{definition}\label{def:AC}
	\rm \textbf{(Artificial cost) \cite{lin2016fast}}. Artificial cost is defined as the ratio of false CLHUIs that are discovered after database sanitization to the total number of CLHUIs. This ratio is calculated by the following formula: $\textit{AC} = \frac{|\textit{CLHUIs'} - \textit{CLHUIs}|}{|\textit{CLHUIs'}|}.$
\end{definition}
	
The relationship between the above three side effects and the CLHUIs is shown in Fig. \ref{HF}. The upper circle (\textit{CLHUIs}) and the lower circle (\textit{CLHUIs'}) represent the sets of CLHUIs mined from the original database and the sanitized database, respectively. The yellow forward-slashed region denotes \textit{MC}, the red backward-slashed region denotes \textit{HF}, and the green forward-slashed region denotes \textit{AC}.
	
\begin{figure}[htbp]
	\centering
	\includegraphics[width=0.46\textwidth]{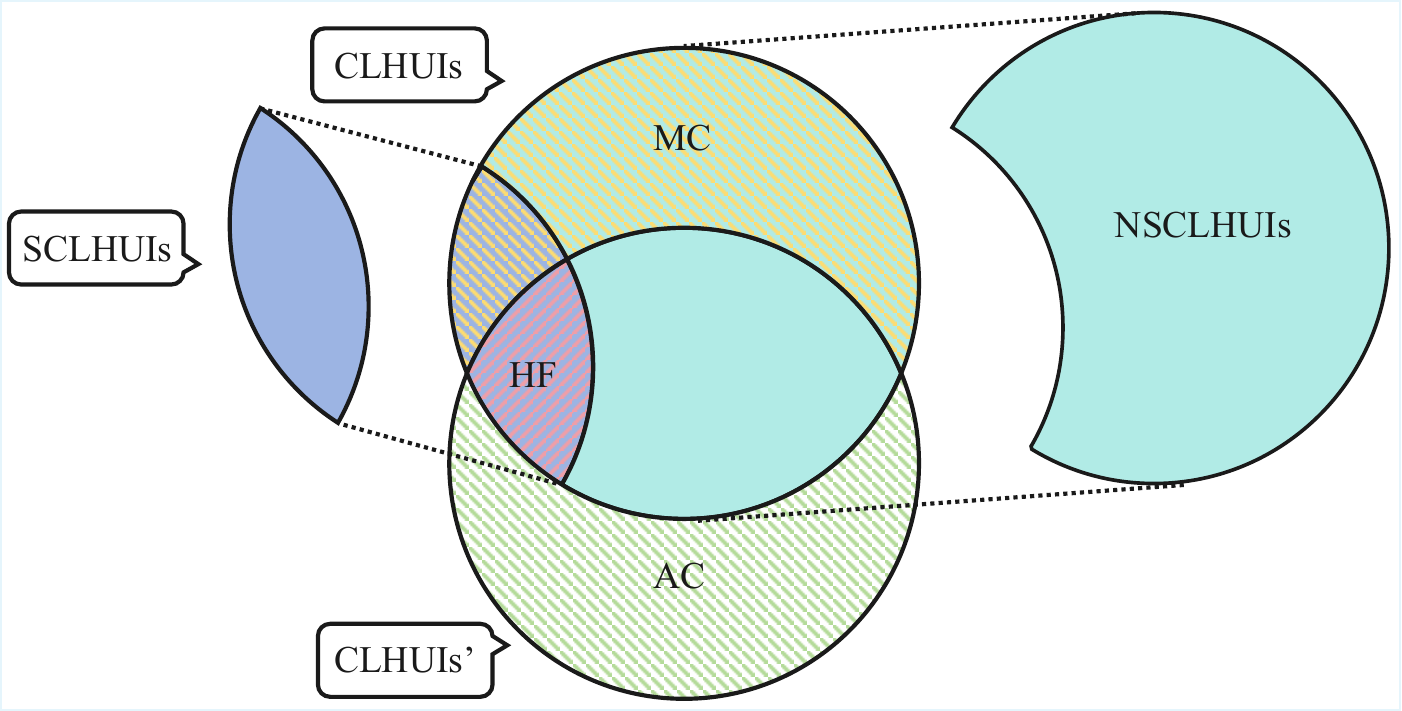} 
	\caption{Relationship of the three side effects to the CLHUIs.}
	\label{HF} 
\end{figure}
	
\begin{definition}\label{def:sim}
	\rm \textbf{(Similarity measures) \cite{lin2016fast}}. Itemset utility similarity is defined as the sum of the utilities of all CLHUIs to the sum of the utilities of CLHUIs: \textit{IUS} = $\frac{\sum_{ P \in \textit{CLHUIs'}} u(P)}{\sum_{ P \in \textit{CLHUIs}}u(P)}.$ It reflects the impact of database sanitization on CLHUIs. Database utility similarity is defined as the ratio of the database utility after sanitization to that before sanitization. It reflects the impact of the sanitization process on the utility of the database: \textit{DUS} = $\frac{\sum_{T_j \in \mathcal{D'}} TU(T_j)}{\sum_{T_j \in \mathcal{D}} TU(T_j)}.$ The transaction modification ratio is defined as the ratio of the number of transactions modified during the database sanitization to the total number of all transactions in the database. It is calculated as follows: $\textit{TMR} = \frac{\# \textit{modified transactions}}{|\mathcal{D}|}.$
\end{definition}

\section{Proposed Algorithms} \label{sec:algorithm}
	
This section presents the proposed CLPPUM framework for hiding sensitive cross-level high-utility itemsets. We first describe the metrics used to evaluate the influence of generalized items on sensitive and non-sensitive itemsets, and then introduce the data structure used to support victim item selection. Based on these components, three algorithms, called Minimum \textit{RGISU}-first (Min-RF), Maximum \textit{RGISU}-first (Max-RF), and Best \textit{NSC}-first (Best-NSCF), are developed. Their privacy-related properties are discussed in Section 4.5.
	
\begin{definition}\label{def:SC_I}
    \rm \textbf{(Sensitive and non-sensitive count of (generalized) items) \cite{liu2020improved}}. Sensitive count \textit{SC}$(g)$ of a (generalized) item $g$ refers to the number of SCLHUIs that contain the (generalized) item $g$, its ancestors, or its descendants. A non-sensitive count \textit{NSC}$(g)$ of a (generalized) item $g$ refers to the number of NSCLHUIs that contain the (generalized) item $g$, its ancestors, or its descendants.
\end{definition}
	
\begin{definition}\label{def:RGISU_I}
    \rm \textbf{(Real generalized item sensitive utility) \cite{ashraf2023efficient}}. Let $g$ be a sensitive (generalized) item. The real generalized item sensitive utility (\textit{RGISU}) of $g$ is: $\textit{RGISU}(g)$ = $\sum_{g \in T_j \land T_j \in ST \land v \in \textit{Leaf}(g, \tau) }u(v, T_j)$
\end{definition}
	
When hiding sensitive itemsets, modifying different items may lead to substantially different side effects. This is mainly due to the fact that sensitive and non-sensitive itemsets often share common items or co-occur in the same transactions. As a result, the choice of victim item directly determines which itemsets will be affected during the sanitization process. In addition, a considerable number of transactions in the database do not contain any sensitive itemsets, and the utilities of items in such transactions are irrelevant to the hiding process. Therefore, selecting victim items solely based on utility computed over the entire database cannot accurately reflect their actual impact on sensitive and non-sensitive itemsets. To better capture this impact, we adopt \textit{RGISU} as the primary criterion for victim item selection. Since \textit{RGISU} is computed only from sensitive transactions, it more effectively reflects the contribution of a (generalized) item to the sensitive itemsets that need to be hidden. As generalized items do not explicitly appear in the database, operations on generalized items are transformed into operations on their corresponding leaf items. In this way, the proposed approach is able to preserve the hierarchical structure while effectively hiding SCLHUIs.
	
According to Table \ref{Tab:CLHUIs} and Table \ref{Tab:SCLHUIs}, $\textit{SC}(Y)$ = 2 and $\textit{NSC}(Y)$ = 4. $\textit{RGISU}(Y)$ = $u(Y, T_1)$ + $u(Y, T_2)$ + $u(Y, T_3)$ + $u(Y, T_5)$ + $u(Y, T_8)$ = 6 + 10 + 7 + 6 + 9 = 38. \textit{SC} and \textit{NSC} are used to select items that appear more frequently in sensitive itemsets and less frequently in non-sensitive itemsets, thereby reducing the impact on non-sensitive itemsets while speeding up the hiding of sensitive itemsets. Since items possess hierarchical information, operations on their descendants or ancestors are likely to impact them. Additionally, this counting method enables lower-level nodes to have higher \textit{NSC}. This causes the algorithm to select higher-level nodes with fewer leaf nodes as victim items. These nodes generally have fewer leaf nodes, which reduces the number of deletion operations and speeds up the hiding process for generalized items. The \textit{RGISU} is used to identify the items with the highest or lowest utility in the $ST$. By selecting such an item as the victim item, the algorithm can either accelerate the hiding process or minimize the impact on non-sensitive itemsets.
	
To explain the algorithm more clearly, this section first introduces the proposed \textit{GI-dic} dictionary structure, followed by a description of the three CLPPUM algorithms: Min-RF, Max-RF, and Best-NSCF.
	
\subsection{Constructing GI-dic dictionary}
	
In order to efficiently calculate and use the assessment metrics to select the victim item $I_{vic}$, this paper constructs a generalized item-dictionary \textit{GI-dic}, which can effectively reduce the traversal consumption of $\mathcal{D}$, SCLHUIs, and NSCLHUIs in the process of calculation.
	
\begin{definition}\label{def:GI-dic}
    \rm \textbf{(Generalized item-dictionary)}. Let $g$ be a (generalized) item, and \textit{GI-dic} stores the transactions $T_j$ in $g$ appears, along with the metrics used for selecting victim items.  \textit{GI-dic} takes $g$ as the key, and the values are \textit{SC}, \textit{NSC}, and \textit{RGISU} of $g$ and the $ST$ in which $g$ appears ($\textit{GI-dic}$[$G_i$] = \{\textit{SC}, \textit{NSC}, \textit{RGISU}, \textit{STs}\}). 
\end{definition}
	
Since generalized items do not exist in the database, additional checks are required to identify whether an itemset is present in the current transaction. When computing the \textit{SC} and \textit{NSC} of a transaction, it is necessary to traverse and identify all SCLHUIs and NSCLHUIs while traversing the database $\mathcal{D}$, resulting in a time complexity of $O(|\mathcal{D}|*|\textit{NSCLHUIs}|*|\textit{NSCLHUI}|)$. \textit{GI-dic} enables fast identification of the transactions containing generalized itemsets through XOR operations. This optimization allows the algorithm to traverse the $\mathcal{D}$, SCLHUIs, and the NSCLHUIs once when calculating the \textit{SC} and \textit{NSC} of a transaction, reducing the time complexity to $O(|\textit{NSCLHUIs}|*|\textit{NSCLHUI}|*|\mathcal{D}_{\textit{NSI}}|)$, where $\mathcal{D}_{\textit{NSI}}$ denotes the set of transactions in which items of NSCLHUI appear. This optimization can improve the performance of the algorithm by 30-50\%. In addition, \textit{GI-dic} can store various metrics of (generalized) items, assisting in the selection of victim items. Based on the introduction of the above definition, the process of constructing \textit{GI-dic} dictionary is shown in Algorithm \ref{algo:GI-dic}.
	
The construction process in Algorithm \ref{algo:GI-dic} is as follows. First, \textit{GI-dic} is initialized for all items (lines 1–4). Next, scan the database once to compute \textit{RGISU} for the (generalized) item $G_i$ in the sensitive transaction and record the TID of this transaction into \textit{GI-dic} (lines 5-15). Specifically, the algorithm first determines whether the current transaction contains any SCLHUI, i.e., whether the transaction is a $ST$ (lines 6–7). If so, the algorithm iterates through all (generalized) items appearing in the transaction, computes their \textit{RGISU}, and records both the \textit{RGISU} and the TIDs of the $ST$ in \textit{GI-dic} (lines 8–11). After that, the algorithm processes each SCLHUI $S_k$ in sequence (lines 16–26). It first traverses all (generalized) items $G_i$ in $S_k$ to compute their \textit{SC}, and uses \textit{GI-dic} to identify all transactions containing $S_k$ (lines 17–22). For each such transaction $T_j$, the corresponding \textit{SC} is increased (lines 23–25). The algorithm then processes the NSCLHUIs $\textit{NS}_k$ (lines 27–37). For each $\textit{NS}_k$, all (generalized) items $G_i$ are traversed to compute their \textit{NSC}, and \textit{GI-dic} is used to locate the transactions in which $\textit{NS}_k$ appears (lines 28–33). The \textit{NSC} of these transactions $T_j$ is then updated accordingly (lines 34–36). Finally, the weight $\textit{Wt}$ of each transaction is calculated, and all transactions are sorted in descending order of $\textit{Wt}$ (line 38). A sample \textit{GI-dic} is shown in Table \ref{Tab:GI-dic}.

\begin{algorithm}[h]
	\small
	\caption{\textbf{GI-dic} construction algorithm}
	\label{algo:GI-dic}
	\LinesNumbered
	\KwIn{a quantitative database $\mathcal{D}$, a taxonomy $\tau$, sensitive cross-level high-utility itemsets \textit{SCLHUIs}, non-sensitive cross-level high-utility itemsets \textit{NSCLHUIs}.}
	\KwOut{generalized item-dictionary $\textit{GI-dic}$.}
		
	\textit{GI-dic} = $\varnothing$\;
		
	\For{\rm \textbf{each} (generalized) item $G_i$ $\in$ $AI$}{
		$\textit{GI-dic}$[$G_i$] = \{0, 0, 0, $\varnothing$\}\;
	}
		
	\For{\rm \textbf{each} transaction $T_j \in \mathcal{D}$}{
			\For{\rm \textbf{each} SCLHUI $S_k \in \textit{SCLHUIs}$}{
				\If {$S_k$ $\subseteq$ $T_j$}{
					\For{\rm \textbf{each} (generalized) item $G_i$ $\in$ $T_j$}{
						$\textit{GI-dic}$[$G_i$].\textit{RGISU} += $u(G_i, T_j)$\;
						$\textit{GI-dic}$[$G_i$].\textit{STs}.\textit{append(j)}\;
					}
					break\;
				}
			}
		}
		
		\For{\rm \textbf{each} SCLHUI $S_k \in \textit{SCLHUIs}$}{
			\textit{Ts} = $\textit{GI-dic}$[$G_i$].\textit{STs} $\land$ $G_i$ $\in$ $S_k$\;
			\For{\rm \textbf{each} (generalized) item $G_i$ $\in$ $S_k$}{
				$\textit{GI-dic}$[$G_i$].sc++\;
				Update the \textit{SC} of descendants and ancestors of $G_i$ according to the taxonomy $\tau$\;
				\textit{Ts} = \textit{Ts} $\cap$ $\textit{GI-dic}$[$G_i$].\textit{STs}\;
			}
			\For{\rm \textbf{each} transaction $T_j \in \textit{transactions}$}{
				$T_j$.sc++\;
			}
		}
		
		\For{\rm \textbf{each} NSCLHUI $\textit{NS}_k \in \textit{NSCLHUIs}$}{
			\textit{Ts} = $\textit{GI-dic}$[$G_i$].\textit{STs} $\land$ $G_i$ $\in$ $\textit{NS}_k$\;
			\For{\rm \textbf{each} (generalized) item $G_i$ $\in$ $\textit{NS}_k$}{
				$\textit{GI-dic}$[$G_i$].nsc++\;
				Update the \textit{NSC} of descendants and ancestors of $G_i$ according to the taxonomy $\tau$\;
				\textit{Ts} = \textit{Ts} $\cap$ $\textit{GI-dic}$[$G_i$].\textit{STs}\;
			}
			\For{\rm \textbf{each} transaction $T_j \in \textit{transactions}$}{
				$T_j$.nsc++\;
			}
		}
		
	Sort transactions $T_j \in \mathcal{D}$ according to their \textit{Wt} in descending order, where \textit{Wt} = $T_j.sc$ / ($T_j$.\textit{nsc} + 1 )\;
	\Return generalized item-dictionary $\textit{GI-dic}$\;	
\end{algorithm}
	
\begin{table}
	\centering	\makeatletter\def\@captype{table}\makeatother\setlength{\belowcaptionskip}{10pt}
	\caption{GI-dic example.}
	\label{Tab:GI-dic}
	{\normalsize\fontfamily{ptm}\selectfont  
			\begin{tabular}{p{0.7cm} p{0.5cm} p{0.7cm} p{1.1cm} p{3.2cm}} 
				\hline
				\textbf{Item} & \textbf{$\textit{SC}$} & \textbf{$\textit{NSC}$} & \textbf{$\textit{RGISU}$} & \textbf{Transactions} \\
				\hline
				$a$ & 2 & 4 & 30 & $T_1, T_2, T_3, T_5, T_8$\\
				$b$ & 2 & 3 & 8 & $T_1, T_3, T_5, T_8$\\
				$c$ & 1 & 3 & 15 & $T_3, T_7$\\
				$d$ & 3 & 3 & 45 & $T_1, T_2, T_3, T_4, T_5, T_6$\\
				$e$ & 2 & 4 & 24 & $T_2, T_3, T_4, T_6, T_8$\\
				$f$ & 0 & 0 & 2 & $T_6$\\
				$X$ & 2 & 4 & 53 & $T_1, T_2, T_3, T_5, T_8$\\
				$Y$ & 2 & 4 & 38 & $T_1, T_2, T_3, T_5, T_8$\\
				$Z$ & 3 & 4 & 69 & $T_1, T_2, T_3, T_4, T_5, T_6, T_8$\\
				\hline
			\end{tabular}
		}
	\end{table}
	
\subsection{Min-RF algorithm}

This section introduces the proposed minimum \textit{RGISU}-first algorithm (Min-RF). The algorithm selects the (generalized) item $I_i$ with the smallest \textit{RGISU} as the victim item, aiming to minimize the impact on NSCLHUIs. The pseudo-code of the Min-RF algorithm is shown in Algorithm \ref{algo:Min-RF}.
	
First, the algorithm scans the database once and calculates and stores the utility of generalized items (line 1). Next, it calls \textit{GI-dic} construction algorithm (Algorithm \ref{algo:GI-dic}) to calculate various metrics for (generalized) item $G_i$ and transaction $T_j$, and sorts transactions $T_j$ in descending order by weight (line 2). Then, it iterates over all SCLHUIs, selects the (generalized) item $I_i$ with the smallest \textit{RGISU} as the victim item $I_{vic}$, and sorts all SCLHUIs $S_i$ in descending order based on the \textit{RGISU} of its $I_{vic}$ (lines 3–6). Finally, the algorithm iterates through all SCLHUI $S_i$ and hides them (lines 7–34). The algorithm first calculates the utility reduction \textit{diff} required to hide $S_i$, then obtains the leaf items $I_{vics}$ of $I_{vic}$, and sorts $I_{vics}$ in descending order based on their \textit{RGISU}, transforming the processing of the (generalized) items into the processing of their leaf items (lines 8–9). Then, traverse $T_j$ contains the current SCLHUI $S_i$, and processes the victim item $I_{vic}$ in the transaction until \textit{diff} $\leq$ 0 and $S_i$ is successfully hidden (lines 10-33). The algorithm traverses the leaf items. If \textit{diff} $\geq$ $u(LI, T_j)$, $LI$ is removed from $T_j$ (lines 15–23). If $u(LI, T_j)$ = $u(I_{vic}, T_j)$, it means that deleting $LI$ will also remove $I_{vic}$ from $T_j$, and $S_i$ will no longer be in $T_j$. In this case, \textit{diff} reduces the utility of $S_i$ in $T_j$. Otherwise, \textit{diff} only reduces the utility of $LI$ in $T_j$ (lines 16–20). The algorithm updates the impact of removing $LI$ on all SCLHUI $S_k$ and removes $LI$ from $T_j$ (lines 21–22). If \textit{diff} $\leq$ $u(LI, T_j)$, the internal utility of $LI$ in $T_j$ is reduced. The algorithm calculates the required decrease in internal utility (\textit{diu}), updates the database for all SCLHUI $S_k$, setting diff to 0 (lines 24–28). Finally, the algorithm completes the hiding of SCLHUIs and returns the sanitized database $\mathcal{D'}$ (line 35).
	
\begin{algorithm}[!ht]
	\small
	\caption{\textbf{The Min-RF algorithm}}
	\label{algo:Min-RF}
	\LinesNumbered
	\KwIn{a quantitative database $\mathcal{D}$, a taxonomy $\tau$, minimum utility threshold \textit{minutil}, sensitive cross-level high-utility itemsets \textit{SCLHUIs}, cross-level high-utility itemsets \textit{CLHUIs}.}
	\KwOut{A sanitized database $\mathcal{D'}$.}
		
	Scan the database $\mathcal{D}$ to compute and store the utility of generalized items\;
		
	Calls \textbf{GI-dic construction algorithm} to compute the \textit{SC}, \textit{NSC}, and \textit{RGISU} of (generalized) items, and to compute the \textit{SC}, \textit{NSC}, and \textit{Wt} of transactions to construct the \textbf{GI-dic} dictionary.
		
		\For{\rm \textbf{each} SCLHUI $S_i \in \textit{SCLHUIs}$}{
			$I_{vic}(S_i)$ = $I_i$, where $I_i \in S_i$ $\land$ $\forall I_j \in S_i$, $\textit{RGISU}(I_i) \leq \textit{RGISU}(I_j)$\;
		}
		
		Sort the SCLHUI $S_i$ $\in$ \textit{SCLHUIs} in descending order according to the \textit{RGISU} of $I_{vic}(S_i)$\;
		
		\For{\rm \textbf{each} SCLHUI $S_i \in \textit{SCLHUIs}$}{
			
			\textit{diff} = $u(S_i) - \textit{minutil} + 1$\;
			
			Obtain all leaf items $I_{vics}$ of $I_{vics}$, and sort $I_{vics}$ in ascending order according to their \textit{RGISU}\;
			
			\For{\rm \textbf{each}transaction $T_j \in \mathcal{D}$}{
				\If {\textit{diff} > 0 $\land$ $S_i \subseteq T_j$}{
					$T_{vic}(S_i)$ = $T_j$\;
					\For{\rm \textbf{each} leaf item $LI \in I_{vics}$}{
						\If {\textit{diff} > 0 $\land$ $LI \in T_j$}{
							\uIf {\textit{diff} $\geq$ $u(LI, T_j)$}{
								\uIf {$u(LI, T_j)$ == $u(I_{vic}, T_j)$}{
									\textit{diff} -= $u(S_i, T_j)$\;
								}\uElse{
									\textit{diff} -= $u(LI, T_j)$\;
								}{\textbf{end}}
								
								Update the SCLHUI $S_k$ $\subseteq$ $\textit{SCLHUIs}$, where $S_k \subseteq T_j$ $\land$ ($LI \in S_k$ $\lor$ $\exists d \in S_k \land LI \in \textit{Desc}(d,\tau)$)\;
								Remove $LI$ from $T_j$\;

							}\uElse{
								\textit{diu} = $\lceil \textit{diff}/eu(I_{vic}) \rceil$\;
								$\textit{iu}(\textit{LI}, T_j)$ -= $\textit{diu}$\;
								Update the SCLHUI $S_k$ $\subseteq$ $\textit{SCLHUIs}$, where $S_k \subseteq T_j$ $\land$ ($LI \in S_k$ $\lor$ $\exists d \in S_k \land LI \in \textit{Desc}(d,\tau)$)\;
								Update $T_j$\;
								\textit{diff} = 0\;
							}{\textbf{end}}
						}		
					}
				}
				
			}
		}
		\Return a sanitized database $\mathcal{D'}$\;		
	\end{algorithm}	
\subsection{Max-RF algorithm}
	
The difference between the maximum \textit{RGISU}-first algorithm (Max-RF) and Min-RF lies in the selection strategy: Max-RF selects the items with the highest \textit{RGISU} as victim items, aiming to reduce the number of deletions involving victim items, thereby improving overall runtime efficiency (line 4). In addition, when sorting the leaf items of $I_{vic}$, the algorithm sorts them in descending order according to \textit{RGISU}, also to minimize the number of operations and improve efficiency (line 6). The pseudocode of the Max-RF algorithm is shown in Algorithm \ref{algo:Max-RF}.

\begin{algorithm}[h!]
	\small
	\caption{\textbf{The Max-RF algorithm}}
	\label{algo:Max-RF}
	\LinesNumbered
	\KwIn{a quantitative database $\mathcal{D}$, a taxonomy $\tau$, minimum utility threshold \textit{minutil}, sensitive cross-level high-utility itemsets \textit{SCLHUIs}, cross-level high-utility itemsets \textit{CLHUIs}.}
	\KwOut{A sanitized database $\mathcal{D'}$.}
		
		Scan the database $\mathcal{D}$ to compute and store the utility of generalized items\;
		
		Calls \textbf{GI-dic construction algorithm} to compute the \textit{SC}, \textit{NSC}, and \textit{RGISU} of (generalized) items, and to compute the \textit{SC}, \textit{NSC}, and \textit{Wt} of transactions to construct the \textbf{GI-dic} dictionary.
		
		\For{\rm \textbf{each} SCLHUI $S_i \in \textit{SCLHUIs}$}{
			$I_{vic}(S_i)$ = $I_i$, where $I_i \in S_i$ $\land$ $\forall I_j \in S_i$, $\textit{RGISU}(I_i) \geq \textit{RGISU}(I_j)$\;
		}
		
		Sort the SCLHUI $S_i$ $\in$ \textit{SCLHUIs} in descending order according to the \textit{RGISU} of $I_{vic}(S_i)$\;
		
		\For{\rm \textbf{each} SCLHUI $S_i \in \textit{SCLHUIs}$}{
			
			\textit{diff} = $u(S_i) - \textit{minutil} + 1$\;
			
			Obtain all leaf items $I_{vics}$ of $I_{vics}$, and sort $I_{vics}$ in descending order according to their \textit{RGISU}\;
			
			\For{\rm \textbf{each} transaction $T_j \in \mathcal{D}$}{
				\If {\textit{diff} > 0 $\land$ $S_i \subseteq T_j$}{
					$T_{vic}(S_i)$ = $T_j$\;
					\For{\rm \textbf{each} leaf item $LI \in I_{vics}$}{
						\If {\textit{diff} > 0 $\land$ $LI \in T_j$}{
							\uIf {\textit{diff} $\geq$ $u(LI, T_j)$}{
								\uIf {$u(LI, T_j)$ == $u(I_{vic}, T_j)$}{
									\textit{diff} -= $u(S_i, T_j)$\;
								}\uElse{
									\textit{diff} -= $u(LI, T_j)$\;
								}{\textbf{end}}
								
								Update the SCLHUI $S_k$ $\subseteq$ $\textit{SCLHUIs}$, where $S_k \subseteq T_j$ $\land$ ($LI \in S_k$ $\lor$ $\exists d \in S_k \land LI \in \textit{Desc}(d,\tau)$)\;
								Remove $LI$ from $T_j$\;
							}\uElse{
								\textit{diu} = $\lceil \textit{diff}/eu(I_{vic}) \rceil$\;
								$\textit{iu}(\textit{LI}, T_j)$ -= $\textit{diu}$\;
								Update the SCLHUI $S_k$ $\subseteq$ $\textit{SCLHUIs}$, where $S_k \subseteq T_j$ $\land$ ($LI \in S_k$ $\lor$ $\exists d \in S_k \land LI \in \textit{Desc}(d,\tau)$)\;
								Update $T_j$\;
								\textit{diff} = 0\;
							}{\textbf{end}}
						}		
					}
				}
				
			}
		}
		\Return a sanitized database $\mathcal{D'}$\;
\end{algorithm}
	
\begin{algorithm}[h!]
	\small
	\caption{\textbf{The Best-NSCF algorithm}}
	\label{algo:Best-NSCF}
	\LinesNumbered
	\KwIn{a quantitative database $\mathcal{D}$, a taxonomy $\tau$, minimum utility threshold \textit{minutil}, sensitive cross-level high-utility itemsets \textit{SCLHUIs}, cross-level high-utility itemsets \textit{CLHUIs}.}
	\KwOut{A sanitized database $\mathcal{D'}$.}
		
		Scan the database $\mathcal{D}$ to compute and store the utility of generalized items\;
		
		Calls \textbf{GI-dic construction algorithm} to compute the \textit{SC}, \textit{NSC}, and \textit{RGISU} of (generalized) items, and to compute the \textit{SC}, \textit{NSC}, and \textit{Wt} of transactions to construct the \textbf{GI-dic} dictionary.
		
		\For{\rm \textbf{each} SCLHUI $S_i \in \textit{SCLHUIs}$}{
			$I_{vic}(S_i)$ = $I_i$, where $I_i \in S_i$ $\land$ $\forall I_j \in S_i$, $\textit{NSC}(I_i) \leq \textit{NSC}(I_j)$ $\land$ $\textit{SC}(I_i) \geq \textit{SC}(I_j)$, if no such item exists, then select $I_i \in S_i$ $\land$ $\forall I_j \in S_i$, $\textit{NSC}(I_i) \leq \textit{NSC}(I_j)$ $\land$ $\textit{RGISU}(I_i) \leq \textit{RGISU}(I_j)$;
		}
		
		Sort the SCLHUI $S_i$ $\in$ \textit{SCLHUIs} in descending order according to the \textit{RGISU} of $I_{vic}(S_i)$\;
		
		\For{\rm \textbf{each} SCLHUI $S_i \in \textit{SCLHUIs}$}{
			
			\textit{diff} = $u(S_i) - \textit{minutil} + 1$\;
			
			Obtain all leaf items $I_{vics}$ of $I_{vics}$, and sort $I_{vics}$ in ascending order according to their \textit{RGISU}\;
			
			\For{\rm \textbf{each} transaction $T_j \in \mathcal{D}$}{
				\If {\textit{diff} > 0 $\land$ $S_i \subseteq T_j$}{
					$T_{vic}(S_i)$ = $T_j$\;
					\For{\rm \textbf{each} leaf item $LI \in I_{vics}$}{
						\If {\textit{diff} > 0 $\land$ $LI \in T_j$}{
							\uIf {\textit{diff} $\geq$ $u(LI, T_j)$}{
								\uIf {$u(LI, T_j)$ == $u(I_{vic}, T_j)$}{
									\textit{diff} -= $u(S_i, T_j)$\;
								}\uElse{
									\textit{diff} -= $u(LI, T_j)$\;
								}{\textbf{end}}
								
								Update the SCLHUI $S_k$ $\subseteq$ $\textit{SCLHUIs}$, where $S_k \subseteq T_j$ $\land$ ($LI \in S_k$ $\lor$ $\exists d \in S_k \land LI \in \textit{Desc}(d,\tau)$)\;
								Remove $LI$ from $T_j$\;
							}\uElse{
								\textit{diu} = $\lceil \textit{diff}/eu(I_{vic}) \rceil$\;
								$\textit{iu}(\textit{LI}, T_j)$ -= $\textit{diu}$\;
								Update the SCLHUI $S_k$ $\subseteq$ $\textit{SCLHUIs}$, where $S_k \subseteq T_j$ $\land$ ($LI \in S_k$ $\lor$ $\exists d \in S_k \land LI \in \textit{Desc}(d,\tau)$)\;
								Update $T_j$\;
								\textit{diff} = 0\;
							}{\textbf{end}}
						}		
					}
				}
				
			}
		}
		\Return a sanitized database $\mathcal{D'}$\;
\end{algorithm}
	
\subsection{Best-NSCF algorithm}
	
The difference between the Best \textit{NSC}-first algorithm (Best-NSCF) and Min-RF is the selection strategy. Best-NSCF aims to select the item that appears most frequently in SCLHUIs and occurs least in NSCLHUIs as the victim item $I_{vic}$. When no such optimal item exists, the algorithm selects the items with the lowest \textit{NSC} and lowest \textit{RGISU} as the victim items to minimize the impact on NSCLHUIs (line 4). The pseudocode of the Best-NSCF algorithm is shown in Algorithm \ref{algo:Best-NSCF}.
		
\subsection{Security analysis}

This section analyzes the privacy-related properties of the proposed CLPPUM algorithm from two aspects: the hiding of sensitive patterns and the difficulty of inferring original information from the sanitized database. We focus on how the proposed modifications affect the observability of SCLHUIs and the inference process based on $\mathcal{D'}$.

\begin{definition}\label{def:Objective}
	\rm \textbf{(Privacy Preservation Objective)}. 	The privacy objective of CLPPUM is defined as follows. In the sanitized database $\mathcal{D'}$, for any $P \in \textit{SCLHUIs}$, $u(P, \mathcal{D'}) < \textit{minutil}$, so that $P$ is no longer identified as a CLHUI. Meanwhile, given only the sanitized database $\mathcal{D'}$, \textit{minutil}, and the publicly known algorithm, an attacker without access to the original database or additional background knowledge cannot uniquely determine the original sensitive itemsets.
\end{definition}

\begin{theorem}\label{the:Hiding}
	\rm \textbf{(hiding property)}. For any $P \in \textit{SCLHUIs}$, CLPPUM reduces its utility in the sanitized database $\mathcal{D'}$ such that $u(P,\mathcal{D'}) < \textit{minutil}$.
\end{theorem}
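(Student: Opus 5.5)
The plan is to argue by tracking the counter \textit{diff} in the outer loop of Algorithms~\ref{algo:Min-RF}--\ref{algo:Best-NSCF}. The three algorithms differ only in how $I_{vic}(S_i)$ is chosen and how the leaf items are ordered, so one argument covers all of them. Fix $P = S_i \in \textit{SCLHUIs}$ and write $u_{\mathrm{cur}}(P)$ for its utility in the database as it stands when $S_i$ is processed. Since other SCLHUIs may be handled earlier, $u_{\mathrm{cur}}(P) \le u(P)$; the ``Update the SCLHUI $S_k$'' lines exist to keep this value current.

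The main invariant is that, during processing of $S_i$,
\[
u_{\mathrm{cur}}(P) - \textit{diff} \le \textit{minutil} - 1 .
\]
Initially this holds with equality, because $\textit{diff} = u_{\mathrm{cur}}(P) - \textit{minutil} + 1$. I then check each branch.

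\begin{itemize}
\item \textbf{Whole-transaction removal.} If $u(LI,T_j) = u(I_{vic},T_j)$, then $LI$ is the only leaf of $I_{vic}$ in $T_j$. Deleting it removes $P$ from $T_j$, so $u(P)$ drops by exactly $u(P,T_j)$, and \textit{diff} drops by the same amount.
\item \textbf{Partial leaf removal.} Otherwise another leaf of $I_{vic}$ survives, so $P$ still occurs in $T_j$ (Definition~\ref{def:Utility_of_a_generalized_item/itemset}). Its utility drops by exactly $u(LI,T_j)$, matching the decrement of \textit{diff}.
\item \textbf{Quantity reduction.} The quantity of $LI$ drops by $\lceil \textit{diff}/p(LI)\rceil$, so $u(P)$ drops by at least \textit{diff}. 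Setting $\textit{diff}=0$ then preserves the inequality.
\end{itemize}

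All modifications are deletions or decreases, so the invariant is an inequality in the correct direction. Hence if \textit{diff} reaches a nonpositive value, $u_{\mathrm{cur}}(P) < \textit{minutil}$.

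The remaining task, which I expect to be the main obstacle, is showing that \textit{diff} actually reaches $\le 0$ before the transaction loop ends. I would argue by contradiction. Suppose the loop finishes with $\textit{diff}>0$. Then every transaction containing $P$ at its visit was processed, and in each one every leaf of $I_{vic}$ present was deleted, since deletion happens whenever $\textit{diff}\ge u(LI,T_j)$ and any other case sets $\textit{diff}=0$. Deleting all leaves of $I_{vic}$ eliminates $P$ from every transaction, giving $u_{\mathrm{cur}}(P)=0$. The invariant then gives $\textit{diff} \le \textit{minutil}-1-0$; the contradiction needs $\textit{diff} \le u_{\mathrm{cur}}(P)$, which I would maintain as a second invariant because each decrement is at most the utility actually removed. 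A subtle point is that the quantity-reduction branch is reached only when $\textit{diff} < u(LI,T_j)$, so $\lceil \textit{diff}/p(LI)\rceil \le q(LI,T_j)$ and the new quantity stays nonnegative. I would also read $eu(I_{vic})$ in the pseudocode as $p(LI)$.

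Finally, I need that later processing never raises $u(P)$. This holds because the algorithms only delete items or reduce quantities, and utilities of itemsets are monotone in both. Therefore $u(P,\mathcal{D'}) \le u_{\mathrm{cur}}(P) < \textit{minutil}$ after the last iteration, for every $P \in \textit{SCLHUIs}$.
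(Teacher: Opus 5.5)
Your argument is correct under your stated reading of the pseudocode: $u(S_i)$, $u(S_i,T_j)$ and $u(I_{vic},T_j)$ are kept current, and $eu(I_{vic})=p(LI)$. Its backbone is the paper's own idea, namely that the sanitization loop keeps lowering the utility of $P$ until it drops below \textit{minutil}; the difference is rigor, not strategy. The paper only notes that each modification lowers the utility of $P$ and then asserts that ``the process continues until the accumulated utility of $P$ falls below the threshold.'' It does not check that the \textit{diff} bookkeeping is sound. It does not treat the cross-level case where deleting one leaf of $I_{vic}$ leaves $P$ in $T_j$. It does not rule out the transaction loop ending with \textit{diff} still positive. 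Your invariant, three-branch check and exhaustion argument supply exactly these pieces, so your version is what actually ties the theorem to the three algorithms; the paper's version buys only brevity.

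Two slips in your termination step are worth fixing. With $u_{\mathrm{cur}}(P)=0$, the first invariant gives $\textit{diff}\ge 1-\textit{minutil}$, not $\textit{diff}\le\textit{minutil}-1$. The second invariant $\textit{diff}\le u_{\mathrm{cur}}(P)$ is preserved when each decrement of \textit{diff} is \emph{at least} the utility removed, not at most. It does hold, but only because your deletion branches give exact equality, the reduction branch resets \textit{diff} to $0$, and $\textit{minutil}\ge 1$ for the base case.

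You can drop the contradiction entirely. If the loop ends with $\textit{diff}>0$, your exhaustion argument already gives $u_{\mathrm{cur}}(P)=0<\textit{minutil}$. Otherwise the first invariant gives $u_{\mathrm{cur}}(P)\le\textit{minutil}-1$. This case split needs only that each decrement of \textit{diff} is at most the utility actually removed, not the exact equalities.

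Finally, state explicitly that a leaf whose quantity is reduced to $0$ is removed from $T_j$. With $q=q(LI,T_j)$, this happens whenever $(q-1)\,p(LI)<\textit{diff}<q\,p(LI)$. Without that convention, when a later SCLHUI is processed, $u(LI',T_j)=u(I_{vic},T_j)$ no longer implies that $LI'$ is the only leaf of $I_{vic}$ present. The whole-transaction branch could then subtract $u(P,T_j)$ from \textit{diff} while $P$ survives in $T_j$ through the zero-quantity leaf, which would break even the first invariant.
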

\begin{proof}
	\rm  Let $P$ be a SCLHUI in the original database $\mathcal{D}$, i.e., $u(P,\mathcal{D}) \ge \textit{minutil}$. During the sanitization process, CLPPUM selects victim items and corresponding transactions, and then performs deletion or utility reduction operations. Each modification decreases the utility of $P$. The process continues until the accumulated utility of $P$ falls below the threshold $\textit{minutil}$. According to the definition of CLHUI, $P$ is no longer identified as a CLHUI in $\mathcal{D'}$. Moreover, this reduction is achieved through a sequence of modifications rather than a single deterministic change, which further affects how the modification process can be interpreted from the sanitized data.
\end{proof}

\begin{theorem}\label{the:Perturbation}
	\rm \textbf{(utility structure perturbation)}.
	The sanitization process alters the original utility relationships among items and transactions, making the relationship between observed modifications and original sensitive itemsets less direct.
\end{theorem}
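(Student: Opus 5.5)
The plan is to make the informal claim of Theorem \ref{the:Perturbation} precise and prove it in two parts. Part (i) shows that the sanitization strictly changes the utility relations. Part (ii) shows that these changes admit more than one explanation in terms of original sensitive itemsets. I would define the observable utility structure of a database as the collection of values $u(v,T_j)$ for leaf items, $u(g,T_j)$ for generalized items, the transaction utilities $\textit{TU}(T_j)$, and the itemset utilities $u(P)$ for $P\subseteq AI$. The statement then splits into these two claims.

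For (i), I would argue directly from the modification steps of Algorithms \ref{algo:Min-RF}--\ref{algo:Best-NSCF}. Whenever some $P\in\textit{SCLHUIs}$ has $\textit{diff}>0$, at least one leaf item $LI\in \textit{Leaf}(I_{vic},\tau)$ in some $T_j\supseteq P$ is either removed or has its internal utility lowered by $\textit{diu}\ge 1$. By Definition \ref{def:Utility_of_a_generalized_item/itemset}, this lowers four quantities:
\begin{itemize}
\item $u(LI,T_j)$ and $\textit{TU}(T_j)$;
\item $u(g,T_j)$ for every ancestor $g$ of $LI$;
\item $u(Q)$ for every itemset $Q$ containing $LI$ or one of its ancestors and occurring in $T_j$;
\item and, in the removal case, the set $g(Q)$ for those $Q$ whose only witness leaf of the relevant generalized item was $LI$.
\end{itemize}
Theorem \ref{the:Hiding} guarantees at least one such step for each SCLHUI. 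So the utility structure of $\mathcal{D'}$ differs from that of $\mathcal{D}$, and the change also propagates to non-sensitive itemsets that share ancestors with $LI$. I would illustrate this with the running example: removing $a$ from $T_1$ changes $u(\{X,d\})$, and it also changes $u(\{X\})$ and $u(\{Z,a\})$.

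For (ii), which I expect to be the main obstacle because ``less direct'' has no formal meaning, I would formalize it as non-uniqueness of preimages. I would show there exist two distinct pairs $(\mathcal{D}_1,\textit{SCLHUIs}_1)$ and $(\mathcal{D}_2,\textit{SCLHUIs}_2)$ that produce the same $\mathcal{D'}$ under the same \textit{minutil}. The construction would exploit the generalized-item mechanism. If $I_{vic}$ is a generalized item, the algorithm may reduce any of its leaves. In addition, a partial reduction $\lceil \textit{diff}/eu(I_{vic})\rceil$ yields the same final quantity from different original quantities and different \textit{diff} values. Concretely, I would take a leaf $LI$ with final quantity $q'$ in $T_j$. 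In one preimage, $LI$ originally had quantity $q'+k_1$ and was reduced to hide sensitive itemset $S_1$. In another preimage, it had quantity $q'+k_2$, with $k_2 \neq k_1$ chosen so that the same transaction order is induced, and was reduced to hide a different $S_2$ that shares the ancestor. Both must leave every sensitive utility below \textit{minutil} and yield identical $\mathcal{D'}$.

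The delicate check is that the \textit{Wt}-ordering and victim selection in the second preimage really produce the same modifications. I would first try to keep all \textit{SC}, \textit{NSC} and \textit{RGISU} orderings unchanged by making the perturbation small. Failing that, I would fall back on a weaker statement: the map from the original to the modified utilities is not injective on the affected ancestors. This weaker form follows immediately, because $u(g,T_j)$ is a sum over leaves, and different leaf-level reductions with the same total give the same $u(g,T_j)$.
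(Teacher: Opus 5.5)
Your plan follows the paper's own line. Its proof also argues that a single item-level modification lowers several itemsets at once, and your part (i), including the propagation through shared ancestors, is a correct and more explicit version of that step. The paper then concludes that several original configurations can produce the same $\mathcal{D}'$, but it only asserts this. Your part (ii) tries to prove it, and that is where the proposal has a gap.

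The two-preimage construction is left unchecked at exactly the hard point. You never verify that the \textit{Wt}-ordering, the victim choice and the \textit{diff}/\textit{diu} arithmetic reproduce the same edits once both the original quantity of $LI$ and the sensitive set are changed. It also leans on the partial-reduction line, where $eu(I_{vic})$ is undefined when $I_{vic}$ is generalized.

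The fallback does not substitute for it. $\mathcal{D}'$ is released at the leaf level, so the fact that different leaf reductions with the same total give the same $u(g,T_j)$ creates no ambiguity for someone reading the leaf quantities directly. Moreover, that fact is a property of summation, not of the map $(\mathcal{D},\textit{SCLHUIs})\mapsto\mathcal{D}'$, so it says nothing about recovering the sensitive itemsets.

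The difficulty is self-inflicted, because you do not need to perturb quantities at all. Keep $\mathcal{D}$ fixed and swap which of two overlapping CLHUIs is sensitive; this is exactly the paper's ``one modification serves several itemsets'' idea made concrete. Take a taxonomy whose only generalized item is $G$, with $\textit{Leaf}(G,\tau)=\{a,b\}$ and $p(a)=p(b)=1$. Let $\mathcal{D}=\{T_1\}$ with $T_1=\{(a,3),(b,4)\}$ and $\textit{minutil}=5$, so $\textit{CLHUIs}=\{\{G\},\{a,b\}\}$, both of utility $7$.
\begin{itemize}
\item Run Min-RF with $\textit{SCLHUIs}=\{\{a,b\}\}$. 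The victim is $a$, since its \textit{RGISU} is $3<4$. Then $\textit{diff}=3\ge u(a,T_1)$, so $a$ is deleted.
\item Run Min-RF with $\textit{SCLHUIs}=\{\{G\}\}$. The victim is $G$ and its leaves are scanned in the order $a,b$. Again $a$ is deleted, \textit{diff} drops to $0$, and $b$ is left untouched.
\end{itemize}
Only the deletion branch is used, so $eu(I_{vic})$ never enters. Both runs output $T_1=\{(b,4)\}$. Hence $\mathcal{D}'$ does not determine the sensitive set, and no ordering-preservation argument is needed.
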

\begin{proof}
	\rm   CLPPUM modifies selected items in victim transactions to reduce the utilities of sensitive itemsets. Since different itemsets may share items or co-occur in the same transactions, modifying a single item can affect multiple itemsets simultaneously. As a result, the mapping between item-level modifications and the affected itemsets is no longer one-to-one. The sanitized database $\mathcal{D'}$ only reflects the final modified utilities, without indicating which sensitive itemset each modification was intended to hide. Therefore, multiple possible original configurations may lead to the same sanitized database $\mathcal{D'}$, making it difficult to uniquely determine the original sensitive itemsets.
\end{proof}

\begin{theorem}\label{the:Expansion}
	\rm \textbf{(expansion of candidate inference space)}.
	After sanitization, sensitive itemsets are mixed with non-sensitive low-utility itemsets, increasing the number of candidates that need to be considered during inference.
\end{theorem}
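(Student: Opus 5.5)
To prove Theorem~\ref{the:Expansion}, I would make the informal ``candidate inference space'' precise and then show that sanitization can only enlarge it. Fix an attacker who sees only $\mathcal{D'}$, \textit{minutil}, and the public algorithm, as in Definition~\ref{def:Objective}. Define the candidate set
\[
\mathcal{C}(\mathcal{D'}) = \{P \subseteq AI : P \text{ is an itemset occurring in } \mathcal{D'} \text{ and } u(P,\mathcal{D'}) < \textit{minutil}\}.
\]
This is the set of low-utility itemsets that could be the image of a hidden SCLHUI. Theorem~\ref{the:Expansion} then amounts to two claims: every $P \in \textit{SCLHUIs}$ lies in $\mathcal{C}(\mathcal{D'})$, and, from the attacker's view, the hidden itemsets cannot be separated from the other members of $\mathcal{C}(\mathcal{D'})$.

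The first claim follows from Theorem~\ref{the:Hiding}. Each $P \in \textit{SCLHUIs}$ has $u(P,\mathcal{D'}) < \textit{minutil}$. I would also check that $P$ still occurs in at least one transaction of $\mathcal{D'}$, or else that an attacker cannot distinguish ``absent'' from ``low utility,'' so that $P$ is an element of $\mathcal{C}(\mathcal{D'})$. Before sanitization, $P$ was not in the corresponding low-utility set $\mathcal{C}(\mathcal{D})$. Moreover, the algorithms only delete leaf items or decrease internal utilities, so utilities never increase. Hence every itemset that was already below \textit{minutil} in $\mathcal{D}$ stays below it in $\mathcal{D'}$, up to itemsets that disappear entirely. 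This gives
\[
\mathcal{C}(\mathcal{D}) \cup \textit{SCLHUIs} \subseteq \mathcal{C}(\mathcal{D'}),
\]
which is the set-inclusion form of ``increasing the number of candidates.''

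For the indistinguishability claim I would reuse the reasoning behind Theorem~\ref{the:Perturbation}. The value $u(P,\mathcal{D'})$ of a hidden $P$ is just some number below \textit{minutil}, and the same holds for the original non-sensitive low-utility itemsets. Because modifications are spread over several transactions and leaf items, and generalized items are hidden via their leaves, $\mathcal{D'}$ carries no marker telling the attacker which members of $\mathcal{C}(\mathcal{D'})$ were ever high-utility. To make this concrete, I would show that for a hidden $P$ and a suitable non-sensitive low-utility $Q$, there is an alternative original database $\mathcal{D}^*$ consistent with $\mathcal{D'}$ in which $Q$, rather than $P$, was the sensitive CLHUI. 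Then $P$ cannot be singled out.

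The main obstacle is this last step. Building $\mathcal{D}^*$ requires re-adding utility to the items of $Q$ in some transactions and then checking that the public deterministic algorithm would map $\mathcal{D}^*$ to the same $\mathcal{D'}$. That check is fragile, because the victim ordering by \textit{RGISU} and \textit{Wt} may change. I would first try the weaker statement, that the attacker's feasible set of possible sensitive collections has size at least two. I would also state that strict growth of $\mathcal{C}$ holds only when $\textit{SCLHUIs}$ is nonempty and the sanitization does not simply remove every hidden itemset from all transactions, and add that condition as an assumption if needed.
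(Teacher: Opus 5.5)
Your formalization breaks at the displayed inclusion $\mathcal{C}(\mathcal{D}) \cup \textit{SCLHUIs} \subseteq \mathcal{C}(\mathcal{D'})$. You require members of $\mathcal{C}(\mathcal{D'})$ to \emph{occur} in $\mathcal{D'}$, and deletion-based sanitization routinely makes itemsets vanish, so neither part holds in general. Take a sensitive $P$ that occurs only in a single transaction $T_j$, where the victim leaf $\textit{LI}$ is the only leaf of $I_{vic}$ in $T_j$ and $u(\textit{LI},T_j)\le\textit{diff}$. Min-RF deletes $\textit{LI}$ and takes the branch $\textit{diff}\leftarrow\textit{diff}-u(S_i,T_j)$. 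After that, $P$ occurs nowhere, so $P\notin\mathcal{C}(\mathcal{D'})$. Likewise, any low-utility itemset that contains a deleted leaf and is supported only by modified transactions drops out, so $|\mathcal{C}(\mathcal{D'})|$ can even be smaller than $|\mathcal{C}(\mathcal{D})|$. Your monotonicity observation $u(Q,\mathcal{D'})\le u(Q,\mathcal{D})$ is correct, but it only controls itemsets that survive. So ``up to itemsets that disappear entirely'' is not a side condition; it is exactly what defeats the count. Assuming that hidden itemsets never vanish would not rescue $\mathcal{C}(\mathcal{D})\subseteq\mathcal{C}(\mathcal{D'})$ either. If you drop the occurrence clause, both inclusions do follow from monotonicity and Theorem~\ref{the:Hiding}. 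But then $\mathcal{C}(\mathcal{D'})$ is just the complement of $\textit{CLHUIs'}$ in the family of all itemsets over $AI$. Your ``expansion'' $|\mathcal{C}(\mathcal{D'})|-|\mathcal{C}(\mathcal{D})| = |\textit{CLHUIs}|-|\textit{CLHUIs'}|$ then merely counts hidden plus lost itemsets, and says nothing about how much an attacker must search.

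The paper measures the expansion along a different axis: it fixes $\mathcal{D'}$ and varies the attacker's threshold. Since $u(P,\mathcal{D'})<\textit{minutil}$, the threshold no longer separates $P$ from other low-utility itemsets. To recover $P$, an attacker must mine at some $\theta\le u(P,\mathcal{D'})$. The result set $\{Q : u(Q,\mathcal{D'})\ge\theta\}$ grows monotonically as $\theta$ decreases, contains $\textit{CLHUIs'}$, and contains every non-sensitive itemset with utility in $[\theta,\textit{minutil})$; among these, $P$ is not singled out by its utility. This handles vanished itemsets automatically, since they force $\theta$ down to $0$. It also needs indistinguishability only \emph{by utility}, which is immediate. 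Your alternative-database step, by contrast, is much stronger than the statement, and you leave it open. It is also not automatic for a deterministic public algorithm. When Min-RF finishes hiding $S_i$ through the partial-reduction branch, it lowers the quantity by $\lceil\textit{diff}/eu\rceil$ units. Unless later steps lower it further, $u(S_i,\mathcal{D'})$ then lands in $[\textit{minutil}-eu(\textit{LI}),\textit{minutil}-1]$, which is exactly the kind of marker you assert $\mathcal{D'}$ lacks. Drop that step and argue through the lowered threshold instead.
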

\begin{proof}
	\rm  After sanitization, for any $P \in \textit{SCLHUIs}$, $u(P, \mathcal{D'})$ $<$ \textit{minutil}. Therefore, they are no longer distinguishable from other low-utility itemsets based on the utility threshold. If an attacker attempts to recover hidden patterns by lowering \textit{minutil}, the number of candidate itemsets increases significantly. In this enlarged candidate set, sensitive itemsets are indistinguishable from many non-sensitive ones. As a result, the attacker examines a much larger set of candidates, which makes accurate identification more difficult in practice.
\end{proof}

In summary, CLPPUM reduces the utilities of sensitive itemsets so that they are no longer identifiable under the given threshold. The modifications introduce dependencies among itemsets and enlarge the set of potential candidates that could explain the observed data. These factors introduce ambiguity into the relationship between the sanitized database and the original sensitive itemsets. As a result, different original databases may correspond to the same sanitized outcome, which limits the ability to uniquely infer sensitive patterns from $\mathcal{D'}$ without additional information.		

\begin{table*}[width=0.8\textwidth,pos=h!]
	\centering	\arraybackslash\makeatletter\def\@captype{table}\makeatother
	\setlength{\belowcaptionskip}{10pt}
	\renewcommand{\arraystretch}{1.2}
	\caption{Characteristics of datasets.}
	\label{Tab:datasets}
		{\scriptsize
			\fontfamily{ptm}\selectfont  
			\begin{tabular}{p{1.3cm} p{1.3cm} p{1.3cm} p{1.3cm} p{1.3cm} p{1.3cm} p{1.3cm} p{1.3cm}}
				\hline
				Dataset & |$\mathcal{D}$| & |$I$| & |$GI$| & Maxlevel & |$T_\textit{MAX}$| & |$T_\textit{AVE}$| & Density\\
				\hline		
				Foodmart & 53,537 & 1,560 & 102 & 5 & 28 & 4.60 & 0.29\% \\
				Fruithut & 181,970 & 1,265 & 43 & 4 & 36 & 3.58 & 0.28\% \\
				Chainstore & 1,112,949 & 40,086 & 11,936 & 10 & 170 & 7.20 & 0.02\% \\
				Chess & 3,196 & 75 & 30 & 3 & 37 & 37.00 & 49.33\% \\
				\hline
			\end{tabular}
		}
\end{table*}

\section{Performance Evaluation}  \label{sec:experiment}
	
To evaluate the performance of the CLPPUM algorithm, all experiments were conducted using Java on a computer equipped with an AMD Ryzen 9 5900HX processor (8 cores) and 16 GB of RAM. To identify a more effective victim item selection strategy, we conduct a comparative analysis of the three proposed algorithms: Min-RF, Max-RF, and Best-NSCF. Traditional PPUM methods cannot effectively analyze or identify generalized itemsets without introducing additional structures. Therefore, HHUIF and MSICF \cite{yeh2010hhuif} were selected as baseline methods in our experiments. These two algorithms have relatively simple structures and can be readily extended to support the identification of generalized items. In the experiments, both the minimum utility threshold (\textit{minutil}) and the number of sensitive itemsets were varied to assess their impact on algorithm performance. Furthermore, the algorithms were evaluated based on seven metrics: runtime, hiding failure (HF), missing cost (MC), artificial cost (AC), itemset utility similarity (IUS), dataset utility similarity (DUS), and transaction modification ratio (TMR).

\subsection{Datasets}
	
We used four datasets containing the taxonomy $\tau$ to evaluate the performance of the algorithms. Foodmart is available in the FEACP repository on GitHub\footnote{Source:\url{https://github.com/nguyenthanhtunghutechsg/FEACP-Evaluation}}, while the Fruithut, Chainstore, and Chess datasets are obtained from the SPMF\footnote{Source:\url{https://www.philippe-fournier-viger.com/spmf/index.php?link=datasets.php}} website. The internal and external utilities of the datasets are generated for the four datasets by taking the largest common factor. Detailed dataset information is provided in Table \ref{Tab:datasets}. $|\mathcal{D}|$ denotes the number of transactions in the dataset $\mathcal{D}$. $|I|$ and $|GI|$ represent the number of primitive and generalized items, respectively. \textit{Maxlevel} indicates the maximum number of levels in the classification hierarchy. $|T_\textit{MAX}|$ and $|T_\textit{AVE}|$ denote the maximum and average number of items contained in each transaction, respectively. \textit{Density} refers to the density of the dataset. Foodmart, Fruithut, and Chainstore are sparse datasets. Chainstore contains the most transactions, (generalized) items, and levels. Chess is the smallest but densest dataset; it has the longest average transaction length. On larger datasets, HHUIF and MSICF may fail to produce results \cite{ashraf2023efficient}. This issue becomes more pronounced in datasets with taxonomy, since the algorithms must handle not only leaf items but also their corresponding generalized items. Therefore, two additional datasets were constructed by extracting the first 5,000 and 10,000 transactions from Foodmart for comparison with the baseline methods.

Since all the proposed algorithms are based on item deletion and can successfully hide all SCLHUIs without introducing false CLHUIs, both the HF and AC of the proposed algorithms are zero. Therefore, this paper does not present the comparison results of these two metrics. When the number of sensitive itemsets is fixed, the number of sensitive itemsets for Chess, Foodmart, Foodmart\_5000, Foodmart\_10000, Fruithut, and Chainstore are: 100, 50, 50, 50, 50, and 2, respectively. When \textit{minutil} is fixed, the \textit{minutil} for Chess, Foodmart, Foodmart\_5000, Foodmart\_10000, Fruithut, and Chainstore are 1,560,000, 580,000, 30,000, 60,000, 3,600,000, and 800,000,000, respectively. Since SCLHUIs typically account for only a small portion of all CLHUIs in real-world scenarios, the proportion of SCLHUIs in the experiments was set to vary between 1\% and 10\% of the total CLHUIs. All sensitive itemsets were randomly selected.

\begin{figure*}[htbp]
	\centering
	\includegraphics[width=0.8\textwidth]{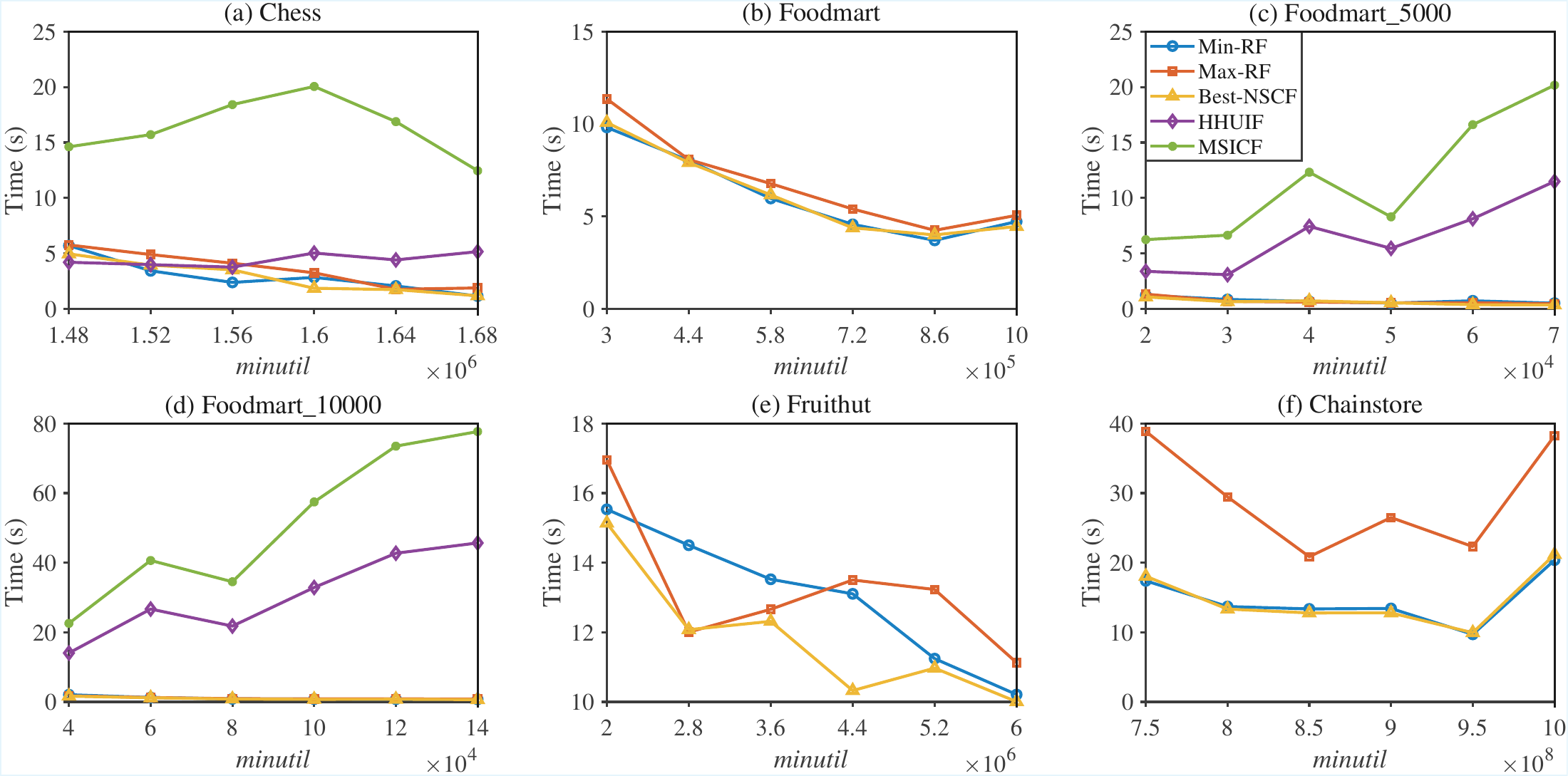} 
	\caption{Comparison of runtime under different minimum utility thresholds.}
	\label{runtime1} 
\end{figure*}
	
\begin{figure*}[htbp]
	\centering
	\includegraphics[width=0.8\textwidth]{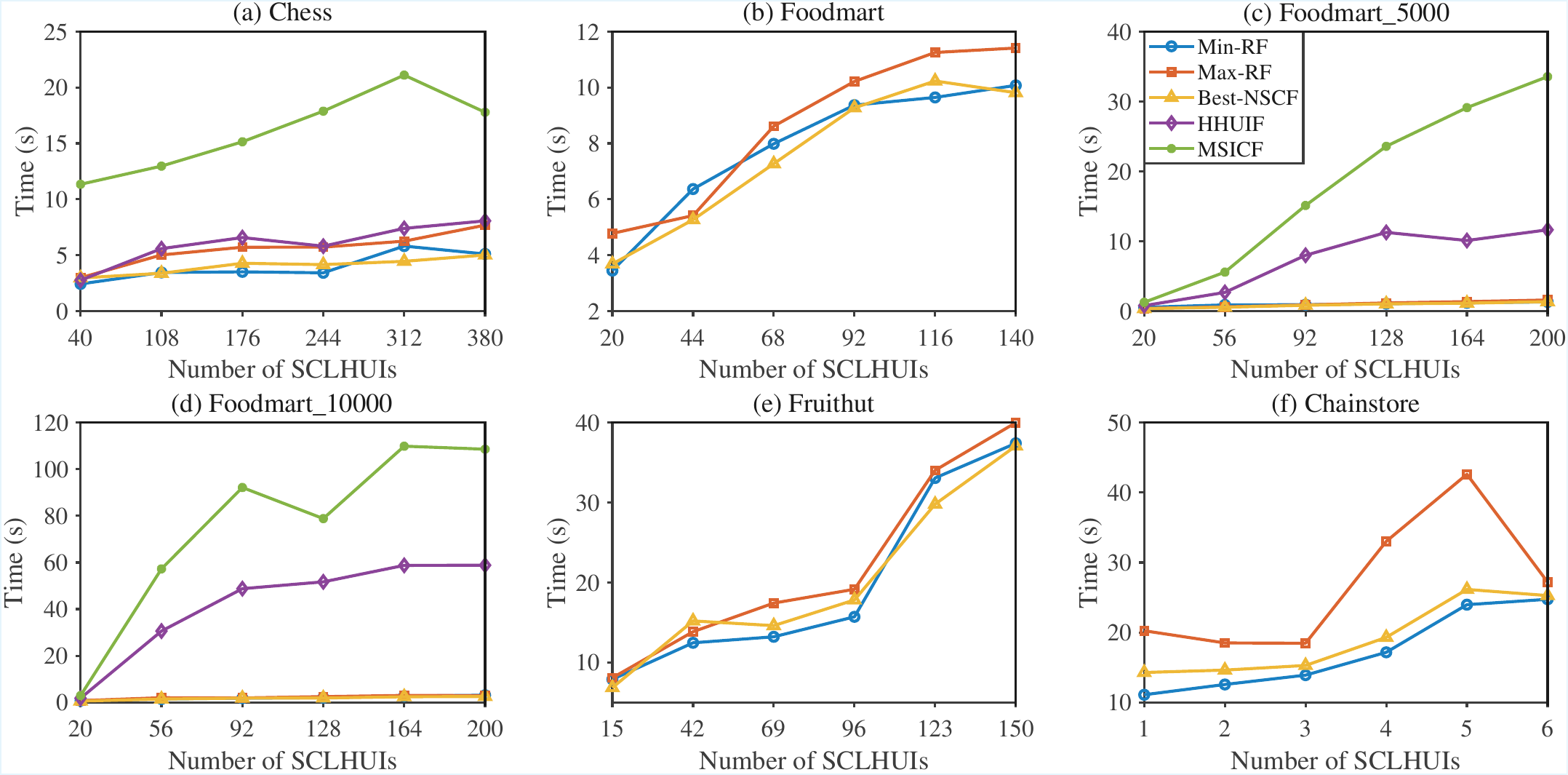} 
	\caption{Comparison of runtime under different numbers of sensitive itemsets.}
	\label{runtime2} 
\end{figure*}
	
\subsection{Runtime}
	
This section evaluates and compares the runtime of the three algorithms. Fig. \ref{runtime1} and Fig. \ref{runtime2} respectively present the runtime of different algorithms when varying \textit{minutil} and the number of sensitive itemsets.
	
Fig. \ref{runtime1} illustrates the runtime performance of each algorithm under different \textit{minutil} thresholds. As \textit{minutil} increases, the runtime of the algorithms decreases. This is because a higher \textit{minutil} results in a lower utility reduction required to hide SCLHUIs, thereby reducing the number of operations needed. On the five datasets other than Chainstore, the runtimes of Min-RF, Max-RF, and Best-NSCF are very similar. However, on the Chainstore dataset, the Max-RF algorithm, which is designed to improve efficiency, has a longer runtime than the other two algorithms, exceeding them by 56.2\% to 130.0\%. This is because, in CLPPUM, the algorithm transforms the processing of victim items into the processing of their leaf items. Although Max-RF reduces the number of operations on victim items by selecting those with the highest \textit{RGISU}, the victim items with the largest \textit{RGISU} often have a lower level and thus have a larger number of leaf items. This leads to a higher number of actual deletion operations compared to the other two algorithms. Additionally, the Chainstore dataset has the largest taxonomy, causing the runtime of Max-RF to increase significantly. For the smallest and densest dataset, Chess, the baseline method MSICF requires, on average 5 to 7 times more runtime than the proposed algorithms. This is mainly because MSICF needs to scan \textit{SC} to select victim items and continuously update the \textit{SC} of items during the hiding process. Moreover, the introduction of generalized items further enlarges the search space, which significantly increases the runtime. In contrast, Min-RF, Max-RF, and Best-NSCF benefit from the \textbf{GI-dic} dictionary, which reduces the computational complexity of related measures and thus improves efficiency. On the sparse datasets Foodmart\_5000 and Foodmart\_10000, the runtimes of HHUIF and MSICF are more than 10 times those of the proposed algorithms. Furthermore, the runtimes of these two baseline methods do not decrease as \textit{minutil} increases. This is because HHUIF and MSICF must reselect victim items after processing each one, and the presence of generalized items further expands the search space, making their runtime largely dependent on the number of victim item selections. Consequently, their runtimes not only increase substantially but also do not exhibit a consistent decreasing trend with increasing \textit{minutil}, instead varying according to the selected sensitive cross-level high-utility itemsets. For Foodmart, Fruithut, and Chainstore, HHUIF and MSICF fail to produce results within 10 minutes due to scalability limitations.

Fig. \ref{runtime2} illustrates the impact of varying the number of sensitive itemsets on the runtime of each algorithm. As the number of sensitive itemsets increases, the runtime of the algorithms increases. Increasing the number of sensitive itemsets requires the algorithms to hide more sensitive itemsets, thereby increasing their runtime. On the Chess dataset, the runtime of MSICF is on average 2–3 times that of Min-RF, Max-RF, and Best-NSCF. On the Foodmart and Fruithut datasets, the runtime of MSICF exceeds that of the three proposed algorithms by more than 30 times on average. On the Chainstore dataset, the runtime of Max-RF is 7.6\%–92.5\% higher than that of Min-RF and Best-NSCF. The differences among the five algorithms and their underlying reasons are largely consistent with those observed in Fig. \ref{runtime1}.

\begin{figure*}[htbp]
	\centering
	\includegraphics[width=0.8\textwidth]{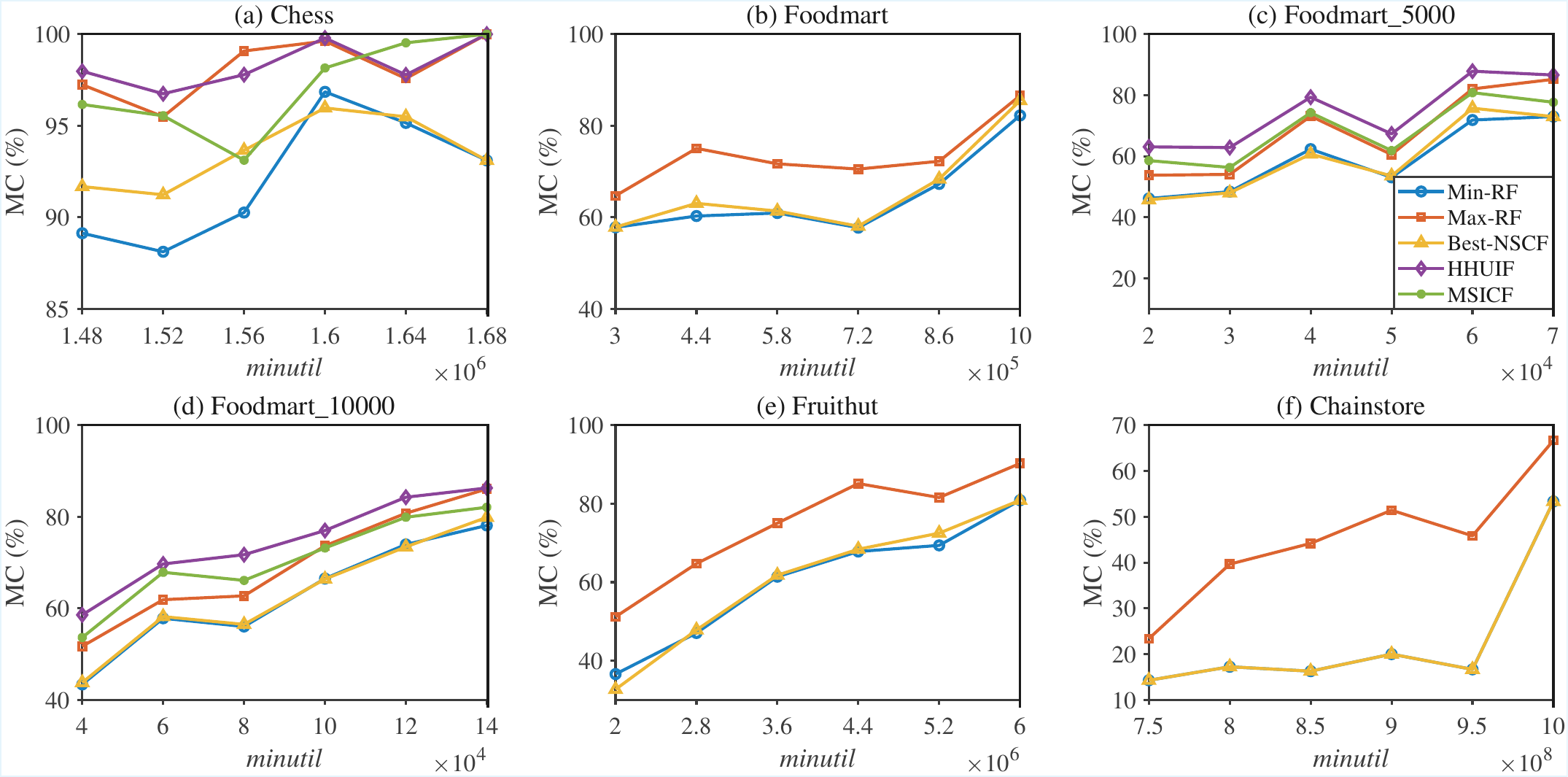} 
	\caption{Comparison of algorithm MC under different minimum utility thresholds.}
	\label{MC1} 
\end{figure*}
	
\begin{figure*}[htbp]
	\centering
	\includegraphics[width=0.8\textwidth]{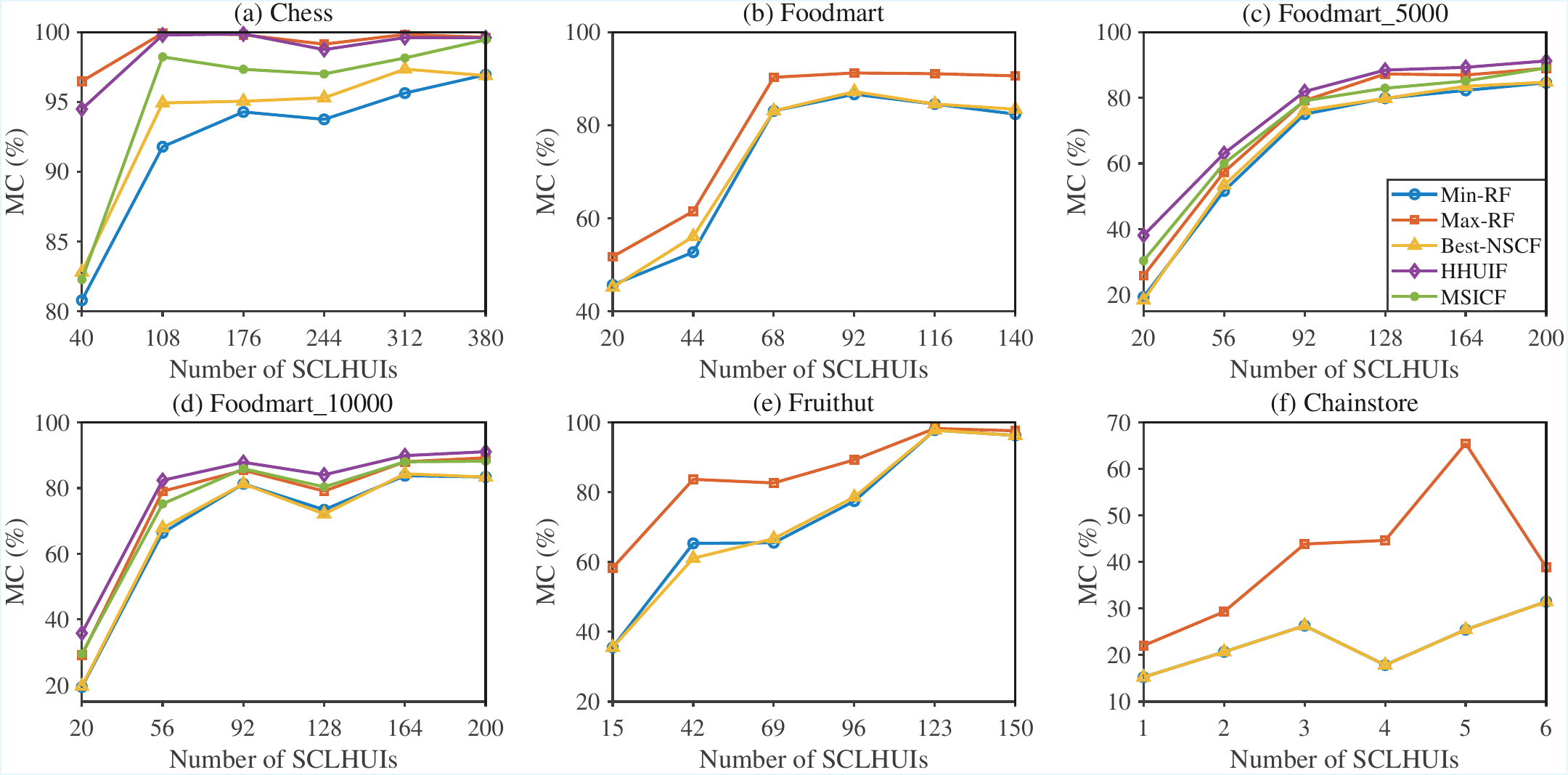} 
	\caption{Comparison of algorithm MC under different numbers of sensitive itemsets.}
	\label{MC2} 
\end{figure*}
		
\subsection{Missing cost}
	\begin{figure*}[htbp]
	\centering
	\includegraphics[width=0.8\textwidth]{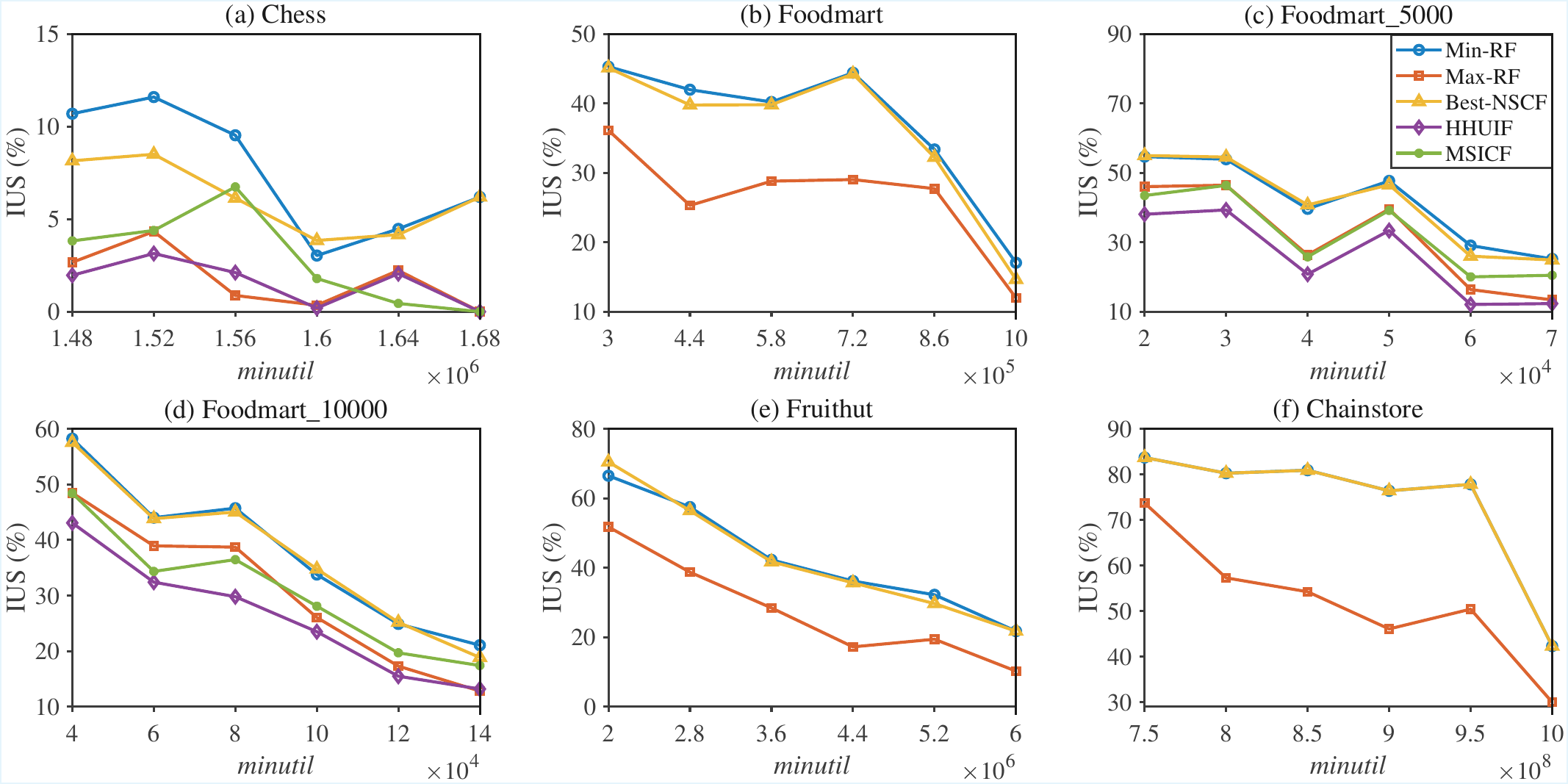} 
	\caption{Comparison of IUS under different minimum utility thresholds.}
	\label{IUS1} 
\end{figure*}

\begin{figure*}[htbp]
	\centering
	\includegraphics[width=0.8\textwidth]{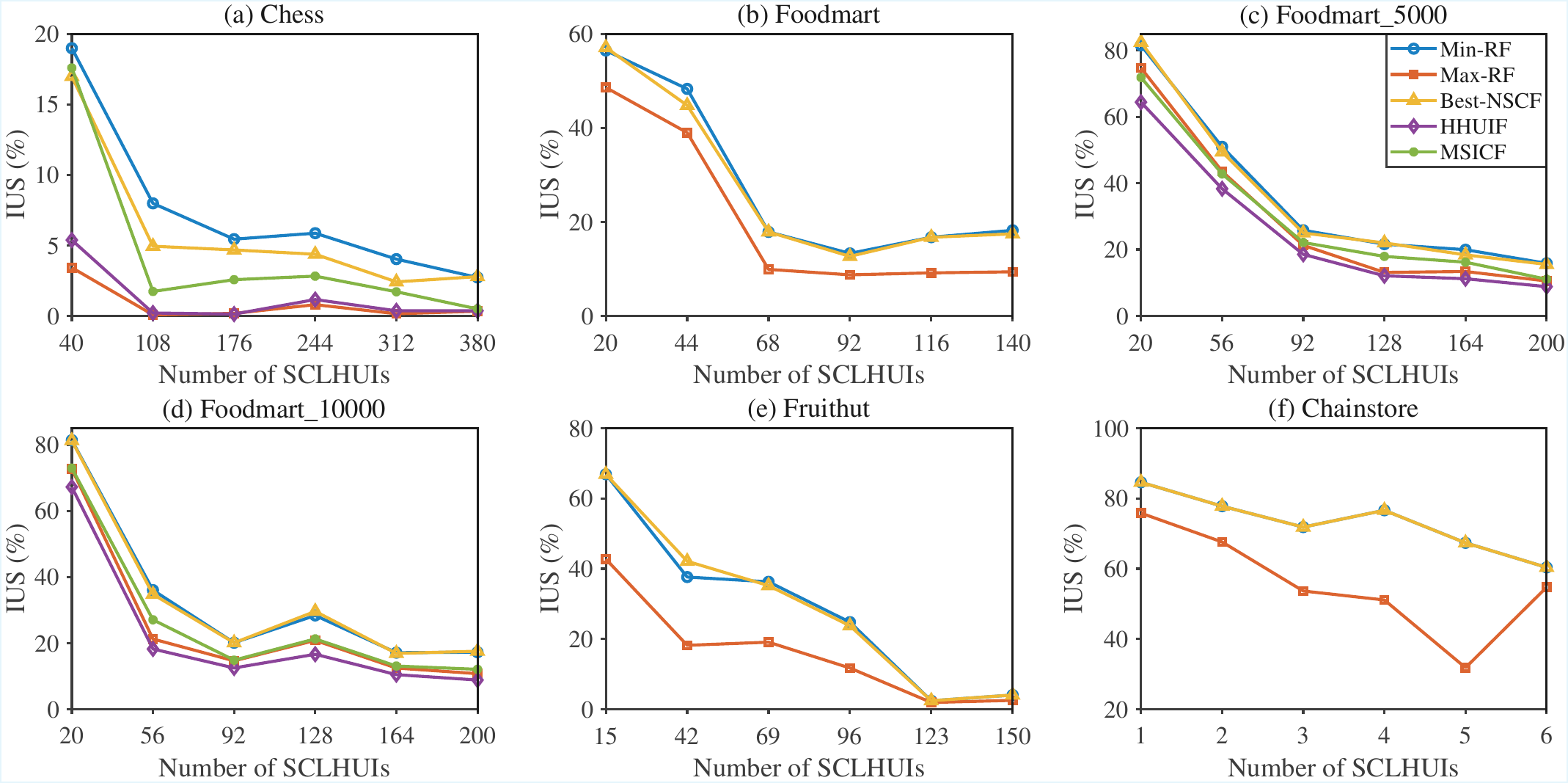} 
	\caption{Comparison of IUS under different numbers of sensitive itemsets.}
	\label{IUS2} 
\end{figure*}

In this section, we evaluate the MC of the Min-RF, Max-RF, and Best-NSCF algorithms, with the results shown in Fig. \ref{MC1} and Fig. \ref{MC2}. Due to the overlap between SCLHUIs and NSCLHUIs, algorithms based on item deletion cannot reduce the MC to 0.
	
Fig. \ref{MC1} shows the impact of changes in \textit{minutil} on the MC of each algorithm. As shown, the missing cost increases with the rise of the \textit{minutil}. This is because, as \textit{minutil} increases, fewer itemsets are discovered by CLHUIM, while the proportion of sensitive itemsets increases, leading to a continuous increase in MC. Across all datasets, the proposed Min-RF and Best-NSCF algorithms achieve the best performance in reducing MC compared with the other methods. Moreover, the MC of these two algorithms are almost identical on sparse datasets. This is mainly because most of the datasets are sparse, making it difficult for Best-NSCF to identify the optimal victim item simultaneously with respect to \textit{SC} and \textit{NSC}. As a result, Best-NSCF often behaves similarly to Min-RF under these conditions and exhibits comparable performance. On the Foodmart\_5000 dataset, the MC of Min-RF is on average 21.0\% and 13.8\% lower than that of HHUIF and MSICF, respectively. On Foodmart\_10000, the reductions are 20.6\% and 13.7\%. On the Foodmart and Fruithut datasets, the MC of Min-RF is on average 14.81\% and 25.7\% lower than that of Max-RF, respectively. This advantage comes from the strategy adopted by Min-RF, which selects the item with the smallest \textit{RGISU} as the victim item. Compared with utility alone, \textit{RGISU} better reflects the impact of modifying a victim item on other sensitive itemsets. On the Chess dataset,  when \textit{minutil} is reduced to the range of 1,480,000 to 1,560,000, Min-RF achieves a 0.6\% to 3.4\% lower MC compared to Best-NSCF. This is because Chess is a dense dataset, where Best-NSCF is better able to select victim items that are optimal in terms of \textit{NSC}. However, in CLHUIM, quantity (i.e., \textit{SC} and \textit{NSC}) provides less information than utility. Even if a (generalized) item has the fewest \textit{NSC} and the largest \textit{SC}, it may still affect a large number of NSCLHUIs due to its numerous leaf items and high utility. Conversely, \textit{RGISU} reflects not only the utility of a (generalized) item in sensitive transactions but also suggests, when lower, that the item resides at a lower level in the taxonomy and has fewer leaf items. Therefore, selecting the (generalized) items with the lowest \textit{RGISU} as the victim item can minimize the impact on database utility while accelerating the hiding process. This explains why Min-RF achieves both lower runtime and lower MC in the experiments.

Fig. \ref{MC2} shows the impact of changes in the number of sensitive itemsets on the MC of each algorithm. As the number of sensitive itemsets increases, the MC continues to rise. This is because the increase in SCLHUIs requires the algorithm to conceal more items, thereby leading to higher MC. On the five sparse datasets, the MC values of Min-RF and Best-NSCF are almost identical. On the Chess dataset, when the number of sensitive itemsets varies from 40 to 312, Min-RF achieves an MC that is 0.8\%–3.3\% lower than that of Best-NSCF. The differences among the five algorithms and the reasons for these differences are essentially the same as those shown in Fig. \ref{MC1}.
	
\begin{figure*}[htbp]
	\centering
	\includegraphics[width=0.8\textwidth]{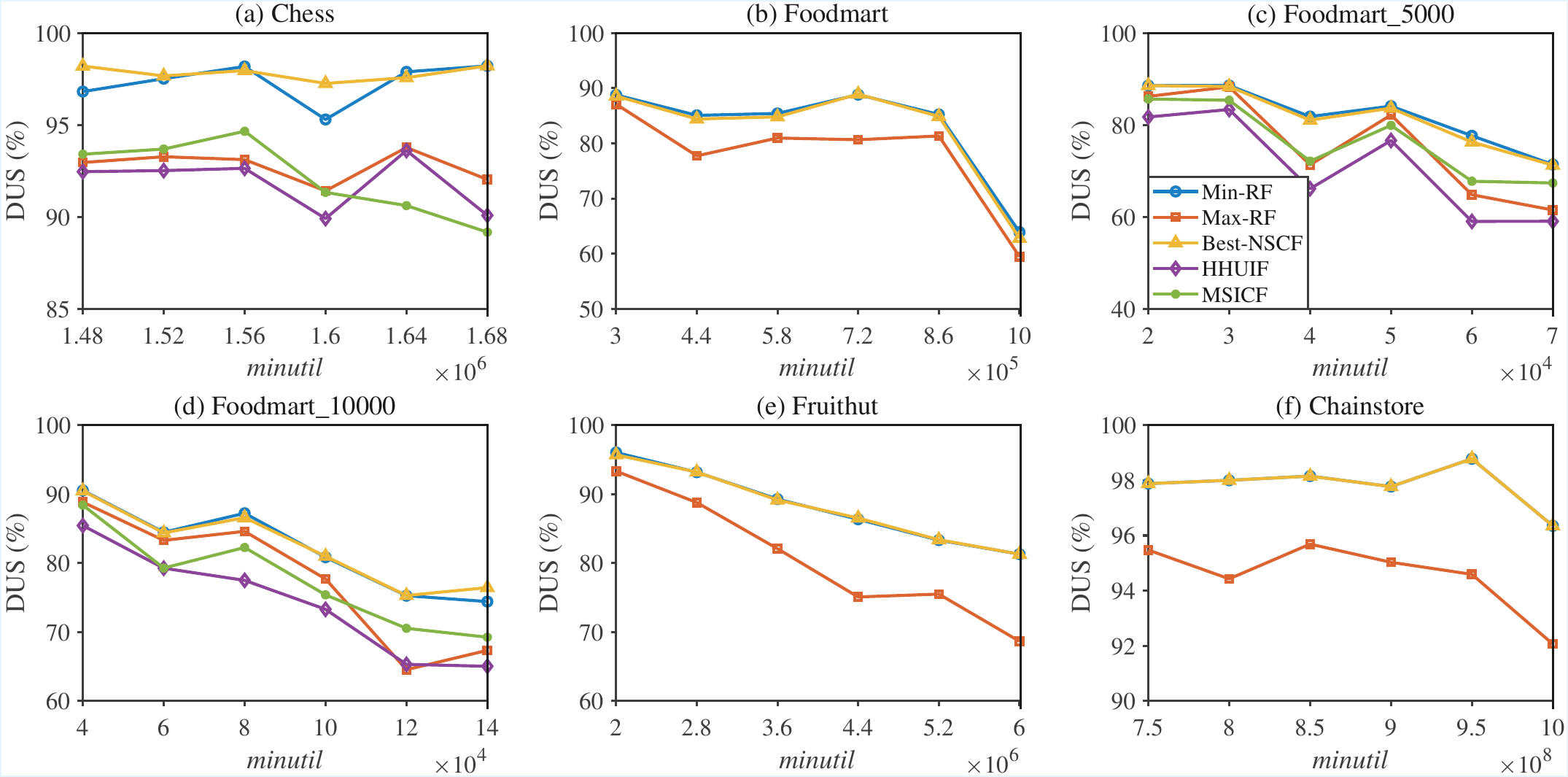} 
	\caption{Comparison of DUS under different minimum utility thresholds.}
	\label{DUS1} 
\end{figure*}
	
\begin{figure*}[htbp]
	\centering
	\includegraphics[width=0.8\textwidth]{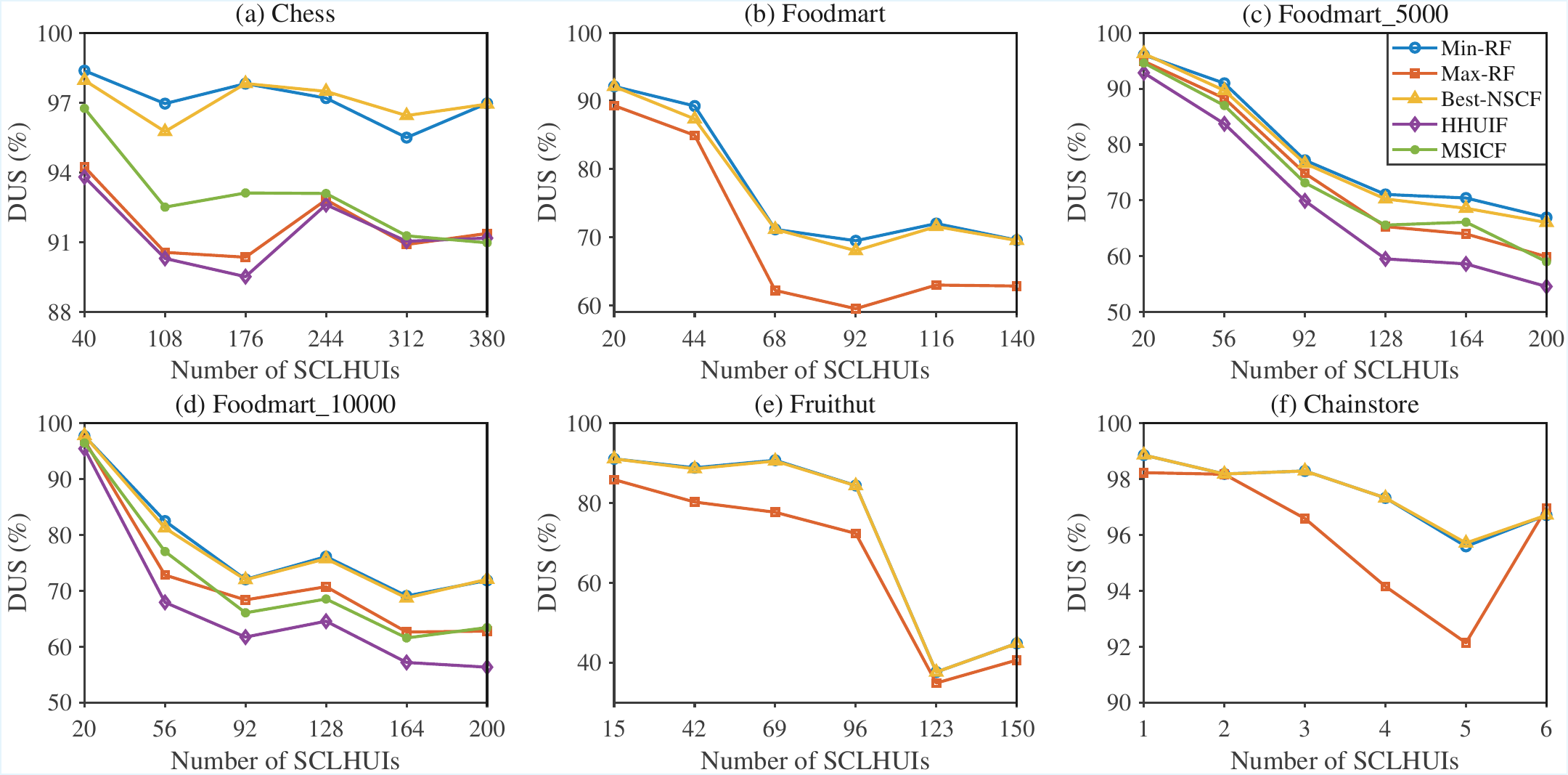} 
	\caption{Comparison of DUS under different numbers of sensitive itemsets.}
	\label{DUS2} 
\end{figure*}
	
\subsection{Itemset utility similarity}

This section presents an evaluation of IUS for the Min-RF, Max-RF, and Best-NSCF algorithms, with the results shown in Fig. \ref{IUS1} and Fig. \ref{IUS2}. Unlike MC, which focuses on the difference in the number of CLHUIs before and after hiding, IUS focuses on the change in the utility of CLHUIs. These two metrics exhibit an inverse relationship: as the MC increases, the IUS decreases accordingly.
	
Fig. \ref{IUS1} shows the impact of changing \textit{minutil} on the IUS of each algorithm. As \textit{minutil} increases, the IUS consistently decreases. This is because a higher \textit{minutil} results in fewer CLHUIs being discovered, increasing the proportion of sensitive itemsets among them. Consequently, the utility difference between the itemsets before and after sanitization becomes larger. Across all six datasets, Min-RF and Best-NSCF achieve the best IUS. On the Chess dataset, when \textit{minutil} falls within the range of 1,480,000–1,560,000, the IUS of Min-RF is 0.6\%–3.4\% higher than that of Best-NSCF. The differences in IUS among the five algorithms further support the conclusions drawn from MC: Min-RF and Best-NSCF perform best on sparse datasets, while on Chess, Min-RF achieves a slightly lower MC than Best-NSCF, leading to a higher IUS.
	
Fig. \ref{IUS2} shows the impact of changes in the number of sensitive itemsets on IUS of each algorithm. As the number of sensitive itemsets increases, the IUS continues to decrease. This is because the more sensitive itemsets the algorithm needs to hide, the fewer CLHUIs can be discovered from the sanitized database, resulting in lower IUS. On the sparse datasets, Min-RF and Best-NSCF achieve the best performance in terms of IUS. On the Chess dataset, the difference in IUS between Min-RF and Best-NSCF is similar to the difference observed in their MC values. The differences among the five algorithms and their underlying causes are largely consistent with those observed in Fig. \ref{IUS1}.

\begin{figure*}[htbp]
	\centering
	\includegraphics[width=0.8\textwidth]{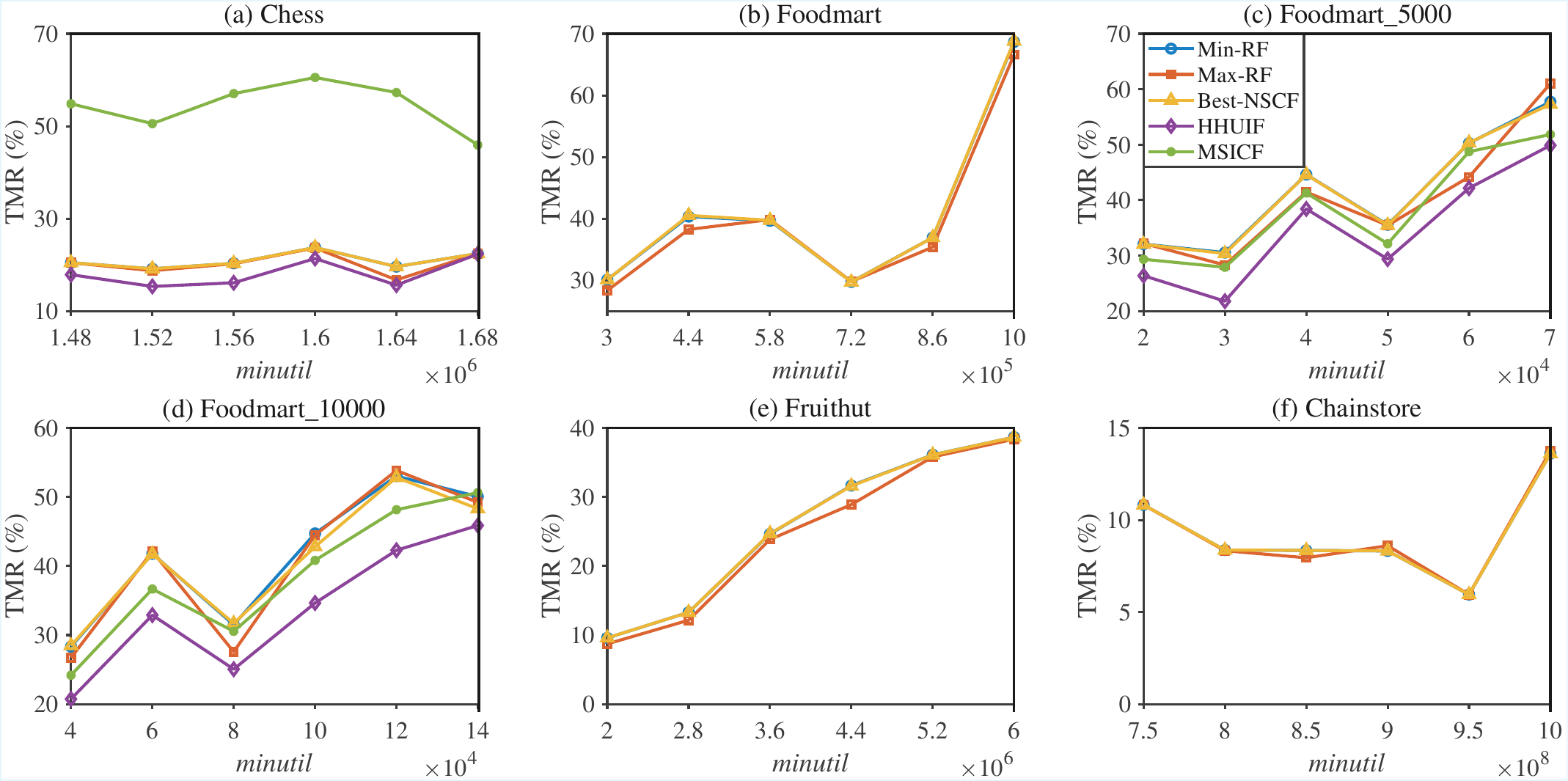} 
	\caption{Comparison of TMR under different minimum utility thresholds.}
	\label{TMR1} 
\end{figure*}
	
\begin{figure*}[htbp]
	\centering
	\includegraphics[width=0.8\textwidth]{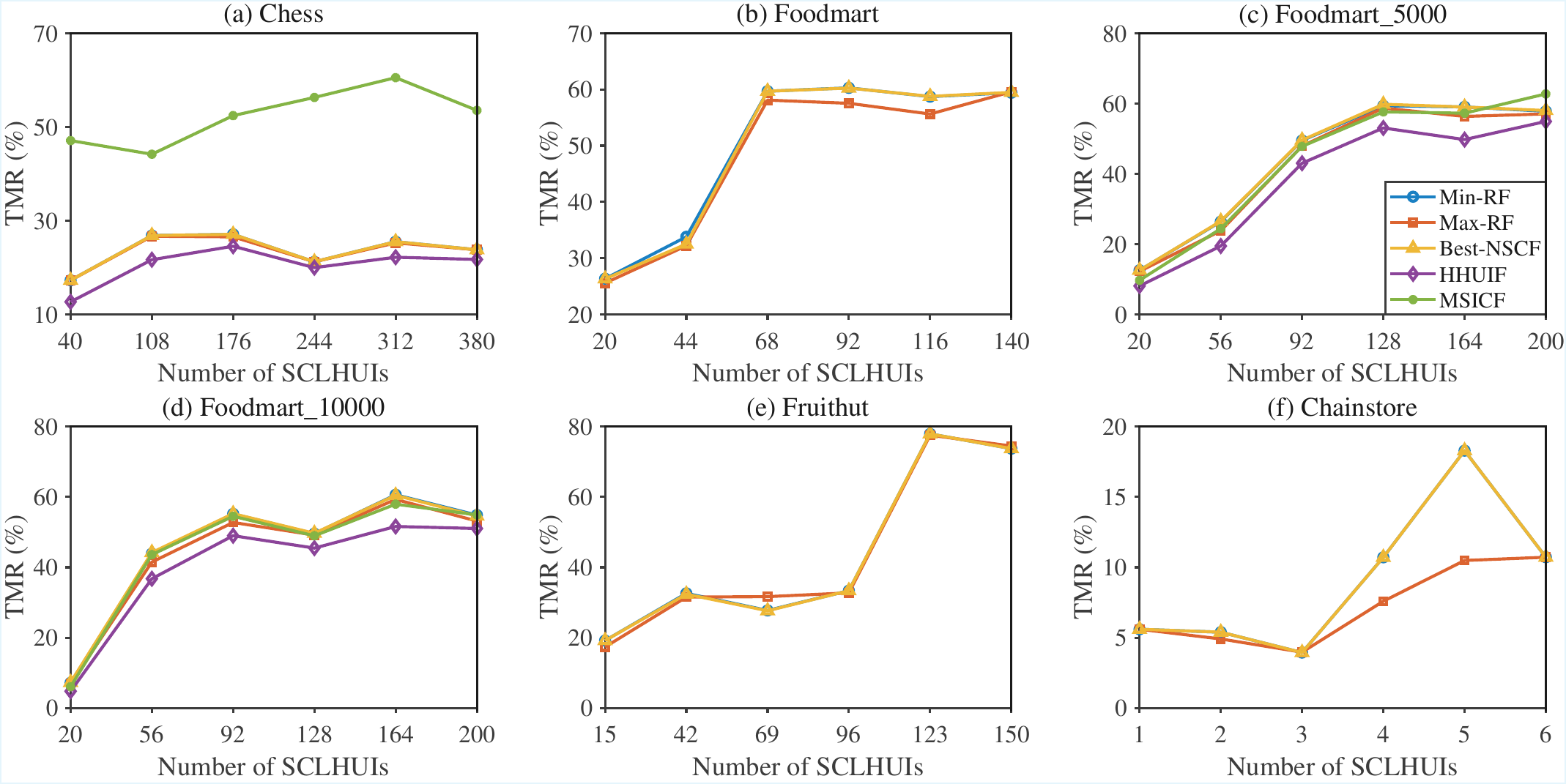} 
	\caption{Comparison of TMR under different numbers of sensitive itemsets.}
	\label{TMR2} 
\end{figure*}
	
\subsection{Database utility similarity}
	
In this section, we evaluated the DUS of the Min-RF, Max-RF, and Best-NSCF algorithms. The results are shown in Fig. \ref{DUS1} and Fig. \ref{DUS2}.  The results from the two experimental approaches show a high degree of consistency. As \textit{minutil} increases, the utility of the selected SCLHUIs also increases. These itemsets often contain more generalized items with lower levels, which leads to an increase in deletion operations and consequently a decline in DUS. As the number of SCLHUIs increases, the algorithm also needs to delete more itemsets, resulting in a continuous decrease in DUS. Across the six datasets, Min-RF and Best-NSCF achieve the best performance. On the Chess dataset, when \textit{minutil} is 1,480,000, and the number of sensitive itemsets is 312, although Min-RF achieves better MC and IUS than Best-NSCF, the DUS of Best-NSCF is 1.4\% and 0.9\% higher than that of Min-RF, respectively. This is due to the randomness in selecting SCLHUIs, which means that although Min-RF only seeks the (generalized) items with the smallest \textit{RGISU} as victim items, in certain cases, this strategy tends to favor generalized items with lower \textit{SC}, which may require the deletion of more victim items to successfully hide the sensitive itemsets. On dense datasets, Best-NSCF can relatively more easily identify items with optimal \textit{SC} and \textit{NSC}. Best-NSCF selects the (generalized) items with the smallest \textit{NSC} values, minimizing their impact on NSCLHUIs, resulting in a higher DUS compared to the Min-RF algorithm.

\subsection{Transaction modification ratio}
	
Finally, we evaluated the TMR of the Min-RF, Max-RF, and Best-NSCF algorithms, with the results shown in Fig. \ref{TMR1} and Fig. \ref{TMR2}. It can be observed that, on all datasets except Chainstore, the TMR of Min-RF, Max-RF, and Best-NSCF remain relatively similar and relatively low. This is because these algorithms process transactions in a fixed order, which reduces scanning overhead and tends to concentrate modifications on transactions that have already been modified. On the Chess dataset when \textit{minutil} = 1,640,000, and on the Chainstore dataset when the number of sensitive itemsets is 4 or 5, Max-RF achieves a lower TMR than the other two algorithms. This is because Max-RF selects the (generalized) item with the largest RGISU as the victim item. Although Max-RF performs more deletion operations on leaf items than the other two algorithms, it requires fewer operations on the victim items themselves and affects fewer transactions containing those items, resulting in a lower TMR. Among the five algorithms, HHUIF produces the lowest TMR. This is because it selects transactions for modification based on utility, causing the modified transactions to be concentrated among those with higher \textit{TU}. In contrast, MSICF exhibits a relatively high TMR on the dense Chess dataset. On dense datasets, strong correlations exist among sensitive itemsets, leading to very similar \textit{SC} values across items. As a result, selecting victim items solely based on \textit{SC} provides limited guidance. Moreover, since MSICF does not process transactions in a fixed order, it often modifies previously untouched transactions while repeatedly handling victim items, which further increases the TMR.

\section{Conclusion} \label{sec:conclusion}

Existing PPUM methods often overlook the hierarchical information inherent in real-world data. To address this gap, this paper formulates the novel task of CLPPUM and designs three new algorithms called Min-RF, Max-RF, and Best-NSCF to hide SCLHUIs. To effectively evaluate the impact of sanitization operations involving generalized items, key metrics (\textit{SC}, \textit{NSC}, and \textit{RISU}) are extended into the cross-level context. Additionally, a dedicated \textit{GI-dic} structure is designed to accelerate metric computation and assist in victim item selection, thereby improving the overall efficiency of the algorithms. Experiments on multiple datasets show that all three proposed algorithms can successfully hide all SCLHUIs without generating false itemsets. Among them, Min-RF demonstrates the best overall performance. However, the proposed method has certain limitations on dense datasets: due to strong correlations among items, the deletion-based strategy affects a large number of non-sensitive itemsets, leading to higher side effects (measured by MC). Therefore, achieving a better balance among mining efficiency, privacy preservation, and side-effect minimization remains an unresolved challenge. Future work will focus on developing more flexible victim item selection and database modification strategies to further reduce side effects and improve performance on dense datasets. Promising directions include integrating multi-objective optimization frameworks to explicitly trade off different metrics, as well as exploring parallel or distributed computing paradigms to enhance scalability.

\section*{Acknowledgment}
This research was supported in part by the National Natural Science Foundation of China (No. 62272196), and Guangzhou Basic and Applied Basic Research Foundation (No. 2024A04J9971).

\section*{Data Availability}
Datasets and code are available at \url{https://github.com/jhcai321/CLPPUM}.

\section*{CRediT Authorship Contribution Statement}
Jiahong Cai: Methodology, Writing original draft.
Wensheng Gan: Review and editing, Supervision.
Philip S. Yu: Review and editing.


\bibliographystyle{cas-model2-names}
	
\bibliography{CLPPUM.bib}

\end{document}